\newcommand{\nocu}{L}
\newcommand{\fgain}{h}
\newcommand{\pfgain}{\gamma}
\newcommand{\errsig}{\sigma_e}
\newcommand{\de}{{d_e}}
\newcommand{\npfgain}{\omega}
\newcommand{\im}{i}
\newcommand{\ang}{\theta}
\newcommand{\ident}{\mathbf{I}}
\newcommand{\txsig}{\boldsymbol{x}}
\newcommand{\conste}{\mathcal{X}}
\newcommand{\consize}{M}
\newcommand{\block}{b}
\newcommand{\noblock}{B}
\newcommand{\power}{P}
\newcommand{\npower}{\pi}
\newcommand{\infoX}{I_{\conste}}
\newcommand{\rate}{R}
\newcommand{\mgain}{r}
\newcommand{\duni}{d^{\rm uni}}
\newcommand{\Pout}{\mathsf{P}_{\rm out}}
\newcommand{\real}{\mathbb{R}}
\newcommand{\outset}{\mathcal{O}}
\newcommand{\incvec}[2]{#1^{(#2)}}
\newcommand{\expectation}[2]{\mathbb{E}_{#1}\left[#2\right]}
\newcommand{\openone}[1]{\leavevmode\hbox{\small1\normalsize\kern-.33em1}\left\{#1\right\}}
\newcommand{\mtxsig}{\boldsymbol{X}}
\newcommand{\mrxsig}{\boldsymbol{Y}}
\newcommand{\mnoise}{\boldsymbol{W}}
\newcommand{\chmat}{\boldsymbol{H}}
\newcommand{\notx}{{N_\tau}}
\newcommand{\norx}{{N_\upsilon}}
\newcommand{\tx}{\tau}
\newcommand{\rx}{\upsilon}
\newcommand{\mindim}{n}
\newcommand{\maxdim}{m}
\newcommand{\eig}{\lambda}
\newcommand{\mpfgain}{\boldsymbol{\Gamma}}
\newcommand{\comp}{\mathbb{C}}
\newcommand{\mnpfgain}{\boldsymbol{\Omega}}
\newcommand{\diag}{{\rm diag}}
\newcommand{\goodset}{\mathcal{S}^{(\epsilon)}}
\newcommand{\tr}{{\rm tr}}
\newcommand{\del}{u}
\newcommand{\mat}[1]{{\boldsymbol #1}}
\newcommand{\EE}{\mathbb{E}}
\newcommand{\half}{\frac{1}{2}}
\newtheorem{theorem}{Theorem}
\newtheorem{cor}{Corollary}
\title{Causal/Predictive Imperfect Channel State Information in  Block-Fading Channels}
\author{Khoa D. Nguyen\thanks{K. D. Nguyen and
   N. Letzepis are with Institute for Telecommunications Research,
   University of South Australia, SPRI Building - Mawson Lakes Blvd.,
   Mawson Lakes SA 5095, Australia, e-mail: {\tt
     \{khoa.nguyen,nick.letzepis\}@unisa.edu.au}. Albert Guill\'en i F\`abregas is with
   the Department of Engineering, University of Cambridge, Trumpington street, Cambridge
   CB2 1PZ, UK, e-mail: {\tt
     guillen@ieee.org}. Lars K. Rasmussen is with the Communication Theory Lab, School of Electrical Engineering and the ACCESS Linneaus Center, Royal Institute of Technology, Stockholm, Sweden, e-mail:{\tt lars.rasmussen@ieee.org}.},~\IEEEmembership{Member,~IEEE}, Nick
 Letzepis,~\IEEEmembership{Member,~IEEE},\\ Albert Guill\'en i
 F\`abregas, ~\IEEEmembership{Senior~Member,~IEEE},  Lars K. Rasmussen, ~\IEEEmembership{Senior~Member,~IEEE}\thanks{This work was presented in part at International Symposium on Information Theory,  Austin, Texas, June 2010.}\thanks{ This work has been supported by the Sir Ross and Sir Keith Smith Fund, Cisco Systems; the Australian Research Council under ARC grants RN0459498, DP0881160; the Royal Society International Travel Grant 2009/R2, 2009/R4; the Swedish Research Council under VR grant 621-2009-4666 and European Community's Seventh Framework Programme (FP7/2007-2013) /ERC grant agreement No 228044.}}
\begin{document}
\IEEEoverridecommandlockouts % to enable thanks
\maketitle
%%%%%%%%%%%%%%%%%%%%%%%%%%%%%%%%%%%%%%%%%%%%%%%%%%%%%%%%%%%%%%%%%%%%%%%%%%%%%%%%
%
% Abstract
%
%%%%%%%%%%%%%%%%%%%%%%%%%%%%%%%%%%%%%%%%%%%%%%%%%%%%%%%%%%%%%%%%%%%%%%%%%%%%%%%%

\begin{abstract}
%    % \vspace{-2mm} % uncomment for 1 column
%   We study the outage diversity of the multiple-input
%   multiple-output (MIMO) Rayleigh block-fading channel when causal channel
%   state information (CSI) is available at the transmitter (CSIT).  %  In the
%   % MIMO block-fading channel, the transmitted codeword is composed of
%   % finite $B$ blocks, where each block contains $L$ channel uses per
%   % transmit antenna. The channel matrix is drawn independently from
%   % block to block, but remains fixed over the symbols transmitted in
%   % any given block. 
% Within this setting, we consider the optimal power allocation for blocks $\block=1, \ldots, \noblock$ given perfect CSIT for blocks $1, \ldots, \block-\del$ only, subject to a long-term power constraint. 
% The parameter $0 \leq u \leq B$ is a  fixed arbitrary integer that determines the delay in acquiring perfect knowledge of the CSI at the transmitter. Without explicitly solving
%   the optimal power allocation problem, we derive the outage diversity
%   of the system. For general $0 \leq u \leq B$, we derive a simple
%   recursive expression for computing the outage diversity. For the
%   special case $u=0$, it is shown that the outage diversity is 
%   infinite, coinciding with previously known results. For $1 \leq u \leq
%   B$, the outage diversity becomes finite and for the special case of
%   $u=1,2$ it can be expressed in simple closed form.
We consider a multi-input multi-output (MIMO) block-fading channel with a general model for channel state information at the transmitter (CSIT).  The model covers systems with causal CSIT, where only CSIT of past fading blocks is available, and predictive CSIT, where CSIT of some future fading blocks is available.  The optimal diversity-multiplexing tradeoff (DMT) and rate-diversity tradeoff (RDT) of the channel are studied under long-term power constraints.  The impact of imperfect (mismatched) CSIT on the optimal DMT and RDT is also investigated.   Our results show the outage diversity gain obtained by providing imperfect causal/predictive CSIT, leading to new insights into system design and analysis. 
\end{abstract}

%%%%%%%%%%%%%%%%%%%%%%%%%%%%%%%%%%%%%%%%%%%%%%%%%%%%%%%%%%%%%%%%%%%%%%%%%%%%%%%%
%
% Introduction
%
%%%%%%%%%%%%%%%%%%%%%%%%%%%%%%%%%%%%%%%%%%%%%%%%%%%%%%%%%%%%%%%%%%%%%%%%%%%%%%%%
\section{Introduction}
The mitigation of fading is a particularly challenging aspect in the
design of reliable and efficient wireless communication systems
\cite{Proakis1995}.  The methods available for dealing with fading-related impairments are influenced by many factors, where some of the most important are the time/frequency behaviour of the fading processes and system constraints in terms of delay and power.   For systems with
no delay constraints or systems subject to fast fading, the channel can be
considered \emph{ergodic}. In this case, long-interleaved fixed-rate
codes that do not exceed the channel capacity can be employed to ensure an
arbitrarily low probability of error
\cite{BiglieriProakisShamai1998,GoldSmithVaraiya1997}. In contrast, for slow fading channels with delay constraints, a transmitted codeword
may only experience a small finite number of independent fading
realisations and hence the channel is \textit{non-ergodic}. 

The \emph{block-fading} channel \cite{BiglieriProakisShamai1998,OzarowShamaiWyner1994} is a simple
model that captures the essence of non-ergodic channels. Here, each codeword
comprises a finite number of blocks, where each block experiences
an independent fading realisation, which remains constant within a given
block. In this case, the instantaneous input-output \emph{mutual
  information} is a random variable dependent on the underlying fading
distribution. For most fading statistics, the channel capacity is zero in the strict
Shannon sense as there is a non-zero \emph{outage probability} that a
fixed information rate is not supported \cite{BiglieriProakisShamai1998,OzarowShamaiWyner1994}. The outage
probability is the lowest achievable word error probability of codes
with sufficiently long block
length~\cite{MalkamakiLeib1999}. As such, a
rate-reliability tradeoff exists, whereby for a fixed number of
blocks, a high rate is penalised by a large error probability.

Most works that study the block-fading channel focus on adaptive transmission techniques in which the power and/or rate is adapted to the channel conditions subject to system constraints (see~\cite{NguyenThesis2009} for a recent review).  Adaptation, however, requires a certain degree of knowledge of the channel fades, also referred to as \emph{channel state information} (CSI), at the transmitter and receiver.  While it is a common assumption that CSI is available at the receiver, the availability of CSIT is system dependent.  Particularly, CSIT can be obtained through the reciprocal channel in time-division duplex systems \cite{KnoppCaire2002}, or via a dedicated feedback channel \cite{LoveHeathLauGesbertRaoAndrews2008}. 

 A large body of work considers full CSI at the transmitter (CSIT), i.e., the transmitter knows  values of the fades on all blocks.  Most notably, the works in \cite{CaireTariccoBiglieri1999,BiglieriCaireTaricco2001,NguyenGuillenRasmussen2010it} study systems with perfect CSIT, while systems with imperfect CSIT is analysed in \cite{KimSkoglund2007TC,KimNguyenGuillen2009}.   This approach has practical relevance for systems exhibiting a set of instantaneous parallel channels, such as Orthogonal Frequency Division Multiplexing (OFDM) systems.  The full CSIT assumption provides an upper bound to the performance of delay-limited communications.   However, there are practical situations where this assumption is invalid.  Specifically,  in time-varying fading channel, causality constraints impose that only CSIT of blocks up to the current block is available \cite{NegiCioffi2002}.  %Assuming perfect causal CSIT, an optimal power adaptation algorithm is proposed in  \cite{NegiCioffi2002,Wong2006,ChenWong2009} based on {\em dynamic programming}. 
Further delay in acquiring the CSIT may impose strictly causal CSIT.  In this case, CSIT is only available with a delay of a few fading blocks.  Strictly causal CSIT arises in systems that experience slow time-varying fading, where CSIT is obtained via a feedback, e.g. free-space optical systems \cite{LetzepisGuillen2008nov}.  Meanwhile, specific systems may allow the CSI of future fading blocks  to be  available at the transmitter.  For example, in mobile communications over slowly spatial fading channels, the CSI of the channel at a future location of the mobile device can be obtained and made available at the transmitter\footnote{The CSIT of future fading blocks can also be predicted from the current channels realisation when the fading blocks are statistically dependent.  However, correlated block-fading channels are not within the scope of this paper.}.  %These scenarios  outage performance of a block-fading channel with general CSIT, including causal CSIT and future-block CSIT, will be studied subsequently.  % For these types of systems it is realistic to assume that only past channel fades are available at the transmitter.

%For system with perfect causal CSIT, an optimal adaptation algorithm is proposed based on dynamic programming.  Meanwhile, for systems with strictly causal CSIT, ce for such systems has been partially analysed, when limited CSIT is obtained from ARQ feedback \cite{ShenLiuFitz2008,NguyenRasmussenGuillenLetzepis2009it,NguyenThesis2009} .  In certain scenarios, {\em predictive CSIT}, i.e., CSIT of future fading blocks, can be obtained.  As an example, when the fading blocks are correlated, the fading gains of future blocks can be predicted based on current channel realisations. Likewise, for mobile communication over slowly temporal fading channels, the CSI of the channel at a future location of the mobile device can be obtained,  and  thus CSI of future transmission blocks can be made available at the transmitter. % Several works have analysed the block-fading channel with causal CSIT \cite{NegiCioffi2002,Wong2006,ChenWong2009}, where power adaptation algorithms based on \emph{dynamic programming} are proposed. However, in  \cite{NegiCioffi2002,Wong2006,ChenWong2009} it is assumed that perfect CSIT is available up to and including the current block to be transmitted.   In practical systems there can be additional latency in the transmitter acquiring the CSIT, e.g. propagation and processing delays.  

In this paper, we analyse the outage performance of the multi-input multi-output (MIMO) block-fading channel with a general CSIT model, which includes systems with causal CSIT and systems where CSIT of future blocks is available.   We consider both systems with perfect CSIT and mismatched CSIT.  For systems with mismatched CSIT, the transmitter is provided with a  noisy version of channel fading gains, modelled with a Gaussian distribution as in \cite{KimNguyenGuillen2009}.  With perfect CSIT, power adaptation algorithms based on {\em dynamic programming} are proposed in \cite{NegiCioffi2002,ChenWong2009} for systems with causal CSIT.  Generalisations of algorithms can be derived for systems with perfect strictly causal CSIT, as well as systems with future-block CSIT.  However, dynamic programming does not provide much insight into the outage performance and may exhibit prohibitive complexity in many scenarios.  With imperfect CSIT a feasible adaptive power allocation rule is not known, even in the full CSIT case \cite{KimNguyenGuillen2009}.  We therefore study the asymptotic outage performance of the block-fading channel for various CSIT scenarios without explicitly solving the optimal power allocation problem.  In particular, we derive the optimal diversity-multiplexing tradeoff (DMT) and the optimal rate-diversity tradeoff (RDT) of the block-fading channel with long-term average power constraints.  From the tradeoffs we gain insights into the impact of causal and predictive CSIT, as well as of imperfect CSIT, on the asymptotic outage performance.   %This reveals the impacts causal and predictive CSIT, as well as  that of imperfect CSIT, on the asymptotic outage performance.  

 The analysis shows that reducing delays in obtaining the CSIT, or increasing the predictive CSIT to include additional future blocks, generally improves the DMT and RDT of the MIMO block-fading channel.  Similarly, improving the quality of the CSIT generally provides large gains in outage diversity at any multiplexing gain/transmission rate.  However,  at a given multiplexing gain/transmission rate, the optimal DMT/RDT may be dominated by either the number of CSIT blocks available or the CSIT quality.  Specifically, strictly causal CSIT provides gains in outage diversity only if the multiplexing gain/transmission rate is sufficiently small.  Furthermore, in agreement with the results in \cite{NguyenRasmussenGuillenLetzepis2009it}, the outage diversity of systems with strictly causal CSIT is always finite, even when the CSIT available is perfect. In contrast, systems with perfect predictive CSIT  achieve infinite outage diversity, which in many cases leads to a positive delay-limited capacity \cite{HanlyTse1998,NguyenGuillenRasmussen2010it}.  With imperfect predictive CSIT, increasing the number of predictive blocks improves the asymptotic outage performance, until the outage diversity is dominated by the CSIT noise.    These results highlight the roles of CSIT and its quality on the asymptotic outage performance of the MIMO block-fading channel, thus providing guidelines for system design. 

The remainder of the paper is organised as follows.  The system model is described in Section \ref{sec:system-model}, while Section \ref{sec:preliminaries} provides some preliminaries necessary for the paper.  In Section \ref{sec:divers-mult-trad}, the DMT of the block-fading channel with the optimal Gaussian input constellation is analysed, while Section \ref{sec:rate-divers-trad} studies the RDT of MIMO block-fading channels with arbitrary discrete input constellations.  Concluding remarks are given in Section \ref{sec:conclusion} and finally, proofs of various results are provided in the Appendices.  

The following notations are used in the paper. Scalar variables are denoted with lowercase symbols, lowercase and uppercase boldface symbols correspondingly denote vector and matrix variables. Sets are denoted with calligraphic symbols, while real and complex sets are denoted with $\real$ and $\comp$ correspondingly.  Hermitian and non-conjugate transposes are correspondingly denoted by $(\cdot)^\dag$ and $(\cdot)^T$.  The trace of a matrix is denoted by $\tr(\cdot)$; while $\diag(\mat A_1, \ldots, \mat A_n)$ denotes a block-diagonal matrix whose diagonal blocks are matrices $\mat A_1, \ldots, \mat A_n$.  Expectations are denoted by $\expectation{}{\cdot}$. The magnitude of $\xi$ is denoted as $|\xi|$, while $\lfloor\xi\rfloor (\lceil \xi\rceil)$ denotes the largest (smallest) integer smaller (greater) than $\xi$.  

%%%%%%%%%%%%%%%%%%%%%%%%%%%%%%%%%%%%%%%%%%%%%%%%%%%%%%%%%%%%%%%%%%%%%%%%%%%%%%%%
%
% System Model
%
%%%%%%%%%%%%%%%%%%%%%%%%%%%%%%%%%%%%%%%%%%%%%%%%%%%%%%%%%%%%%%%%%%%%%%%%%%%%%%%%

\section{System Model}
\label{sec:system-model}
In this paper, we consider a MIMO block-fading channel,
with $\notx$ transmit and $\norx$ receive antennae.  For convenience, define $\mindim = \min\{\notx, \norx\}$ and $\maxdim= \max\{\notx, \norx\}$.  Binary data is encoded with a code of rate $R$ bits
per channel use, constructed over an alphabet $\mathcal{X} \subseteq \comp$. The resulting transmitted codeword consists of
$B$ blocks, where each block comprises of $L$ vector channel uses of size $\notx \times 1$. We denote $\mat{x}_b[l] \in \mathcal{X}^{\notx \times 1}$
as the $l$th transmitted symbol vector of block $b$, for $l =1,\ldots,L$ and $b = 1,\ldots, B$. The symbols are assumed to be drawn independently from $\mathcal{X}$ with unit average energy, i.e., $\EE\left[\mat{x}_b[l] \mat{x}^{\dagger}_b[l] \right] = \mat{I}_{\notx}$, where $\mat{I}_N$ the $N \times N$ identity matrix. 

We denote by $\mat{H}_b$ the $\norx \times \notx$ complex channel matrix for
block $b = 1,\ldots,B$. These matrices are drawn independently for each block, and remain fixed for the corresponding $L$  channel uses.  In addition, we assume the elements of the channel matrix
are independently, identically distributed (i.i.d.) complex Gaussian random variables (the Rayleigh fading channel model~\cite{Proakis1995})\footnote{The Rayleigh fading assumption is included mainly for notational simplicity.  The analysis can be extended to include a general fading distribution using the results in \cite{ZhaoMoMaWang2007}. }. Let $\widehat{\mat{H}}_\block$, available at the transmitter, be a noisy version of the true channel realisation $\mat{H}_\block$, so that 
\begin{equation}
\label{eq:mismatch_model}  \mat{H}_\block = \widehat{\mat{H}}_\block + \mat{E}_\block, \;\;\block=1, \ldots, \noblock, 
\end{equation}
where $\mat{E}_\block \in \comp^{\norx\times \notx}$ is the CSIT noise matrix (independent of $\widehat{\mat{H}}_\block$) whose entries are  i.i.d. complex Gaussian random variables with zero mean and variance $\errsig^2$.  This model has been motivated for imperfect CSIT in many practical communication scenarios \cite{VisotskyMadhow2001,JongrenSkoglundOttersten2002,LimLau2008,KimNguyenGuillen2009}.  As in \cite{LimLau2008,KimNguyenGuillen2009}, we assume that the CSIT noise variance decays as
\begin{equation}
  \errsig^2 = \power^{-\de} 
\end{equation}
for some $\de \geq 0$, where $\power$ is the power constraint to be defined subsequently.  For convenience, we introduce the normalised channel gains 
\begin{equation}
\label{eq:norm_CSIT_noise}
  \overline{\chmat}_\block= \frac{\sqrt{2}}{\errsig}\chmat_\block.  
\end{equation}
Given $\widehat{\chmat}_\block$, then $\overline\chmat_\block$ is a complex Gaussian matrix with mean $\frac{\sqrt{2}}{\errsig}\widehat{\mat{H}}_\block$ and entries having a scaled unit variance (unit variance on real and imaginary dimensions).  %Let $\pfgain_{\block, \tx, \rx} \triangleq |h_{\block, \tx, \rx}|^2$ and $\ang_{\block, \tx, \rx}$ respectively denote the fading power gain and phase corresponding to the link between transmit antenna $\tx$ and receive antenna $\rx$ at block $\block$.   Then define $\overline{h}_{\block,\tx,\rx} = \frac{\sqrt{2}}{\errsig}h_{\block, \tx, \rx}$ and $\overline\pfgain_{\block, \tx, \rx} = |\overline{h}_{\block, \tx, \rx}|^2$.   

 At the transmission of block $b$, we assume the transmitter only has knowledge of $\widehat{\mat{H}}^{(b-u)} = \diag(\widehat{\mat{H}}_1, \ldots, \widehat{\mat{H}}_{b-u})$, where $-B \leq u \leq B$ is an arbitrary fixed integer. When $0 < u \leq \noblock$, the parameter $u$ models the delay in obtaining CSI at the transmitter, due to, e.g., propagation and processing delays. When $-\noblock \leq u \leq 0$, the parameter $\del$ models the number of future blocks with predictive CSIT.  %Even though not as widely practical compared to causal CSIT, a-causal CSIT may be available in certain communication scenarios, such as in systems where channel is estimated from a separate channel, which is correlated to the future realisations of the channel of interest. 

We assume that the signal at each receive antenna is corrupted by independent, zero-mean unit-variance additive white Gaussian noise (AWGN). Hence, under these assumptions, the $b$th block of $\norx \times L$ received noisy
symbols is 
\begin{equation}
\label{eq:channel_model}
\mrxsig_\block= \chmat_\block \mat\power^{\half}_\block\left(\incvec{\widehat{\mat{H}}}{\block-\del}\right)  \mtxsig_\block+ \mnoise_\block, \;\;\;\;\;\; \block = 1, \ldots, \noblock, 
\end{equation}
where $\mtxsig_\block \in \conste^{\notx\times \nocu}$, $\mrxsig_\block \in \comp^{\norx\times
  \nocu}$; are correspondingly the transmit and receive signal in block $\block$; $\mnoise_\block \in \comp^{\norx\times \nocu}$ is the noise matrix whose elements are drawn i.i.d. from the zero-mean unit-variance Gaussian distribution; and $\mat\power_b\left(\widehat{\mat{H}}^{(b-u)}\right) \in \real_+^{\notx \times \notx}$ is a diagonal matrix whose $\tx^{\rm th}$ diagonal element denotes the power allocated to transmit antenna $\tx$ of block $b$.  We further assume that the receiver has perfect knowledge of $\mat{H}_\block$ and $\mat\power_\block\left(\incvec{\widehat\chmat}{\block-\del}\right)$ when receiving block $\block$. The power allocation is subject to the long-term power constraint,
\begin{equation}
 \expectation{}{\frac{1}{\noblock}\sum_{\block=1}^\noblock \tr\left(\mat\power_\block\left(\incvec{\widehat{\chmat}}{\block-\del}\right)\right)} \leq \power.  \label{eq:pow_constraint}
\end{equation}

\section{Preliminaries}
\label{sec:preliminaries}
The channel described by~\eqref{eq:channel_model} 
is not information stable under the assumption of quasi-static fading \cite{VerduHan1994}
and as a consequence, the capacity in the strict Shannon sense is zero. We
therefore study the information outage probability,
\begin{equation}
  \Pout(P, \rate) = \Pr\left\{\frac{1}{\noblock}\sum_{\block=1}^\noblock \infoX\left(\chmat_\block\mat\power^{\half}_\block\left(\incvec{\widehat\chmat}{\block-\del}\right)\right) < \rate\right\},  \label{eq:pout}
\end{equation}
which is a fundamental limit on the codeword error performance of any coding
scheme~\cite{BiglieriProakisShamai1998,OzarowShamaiWyner1994,MalkamakiLeib1999}. In~\eqref{eq:pout},
$\infoX(\boldsymbol{S})$ denotes the input-output mutual information
of a MIMO block-fading channel with input constellation $\conste$ and
channel matrix $\boldsymbol{S}$.  With the optimal Gaussian input constellation, 
\begin{equation}
\label{eq:info_MIMO_Gauss}
  \infoX(\boldsymbol{S}) = \log_2\det(\mat{I}_\norx + \mat{S}\mat{S}^\dag);
\end{equation}
while with a uniform discrete and fixed constellation $\conste$ of size $2^\consize$, 
\begin{equation}
\infoX(\boldsymbol{S})  = \consize\notx -
\label{eq:MIMO_info}\frac{1}{2^{\consize\notx}} \sum_{\boldsymbol x \in \conste^\notx}\expectation{\boldsymbol z}{\log_2\left(\sum_{\boldsymbol x' \in \conste^\notx}e^{-\|\boldsymbol{S}(\boldsymbol x - \boldsymbol x ') + \boldsymbol z\|^2 + \|\boldsymbol z\|^2} \right)}. 
\end{equation} 
Given mismatched CSIT  $\incvec{\widehat{\chmat}}{\block-\del}$, $\mat P_b\left(\incvec{\widehat\chmat}{b-\del}\right)$ is the solution to the minimisation problem
\begin{equation}
\begin{cases}
{\rm minimise} & \Pout(\power,\rate)  \\
{\rm subject~to:} &   \expectation{}{\frac{1}{\noblock}\sum_{\block=1}^\noblock \tr\left(\mat\power_\block\left(\incvec{\widehat\chmat}{\block-\del}\right)\right)} \leq \power.  \\ 
\end{cases} \label{eq:palloc_prob}
\end{equation}
For systems with full perfect CSIT, i.e., $\incvec{\chmat}{\noblock}$ is known at the transmitter prior to transmission, the optimal power allocation rule  and outage diversity is studied in \cite{BiglieriCaireTaricco2001,CaireTariccoBiglieri1999,NguyenGuillenRasmussen2010it}.  With perfect causal CSIT, i.e., $\incvec{\widehat\chmat}{\block-\del} = \incvec{\chmat}{\block-\del}$, and $u=0$, \eqref{eq:palloc_prob} can be solved via dynamic programming~\cite{NegiCioffi2002,ChenWong2009}. The extension to $u > 0$ or $-\noblock < \del < 0$ is also possible, although the problem becomes exceedingly difficult as $|u|$ increases.  With mismatched CSIT, the problem becomes even more challenging \cite{KimNguyenGuillen2009}.  However, as we shall see, it is possible to examine the asymptotic behaviour of $\Pout(\power,\rate)$ without explicitly solving~\eqref{eq:palloc_prob}.  In particular, for systems with a Gaussian input constellation, we study the DMT \cite{ZhengTse2003} $d(r)$ defined as
\begin{equation}
\label{eq:dmt_define}
  d(r)= \lim_{\power\to \infty} \frac{-\log \Pout(\power, \mgain\log_2\power)} {\log \power}, 
\end{equation}
where $ \mgain \in [0, \mindim]$ is the multiplexing gain.  Meanwhile, for systems with discrete input constellation $\conste$ of size $2^\consize$, we study the RDT $d(\rate)$, 
\begin{equation}
  \label{eq:diver_def} 
  d(\rate) \triangleq \lim_{\power \to \infty} \frac{-\log \Pout(\power, \rate)}{\log \power} 
\end{equation}
with $\rate \in (0, \consize\notx)$.  Note that the optimal RDT has been derived in \cite{KimNguyenGuillen2009} for the special case $\del=-\noblock$, where mismatched CSIT of all fading blocks is known prior to the transmission of each codeword.

For systems with uniform power allocation and Gaussian input constellation, it follows from \cite{ZhengTse2003} that the DMT $\duni(\mgain)$ is the piecewise linear curve connecting the points $(k, \noblock(\notx-k)(\norx-k)), k=0, \ldots, \mindim$.   Meanwhile, for systems with discrete input constellation $\conste$ of size $2^\consize$, the RDT is given by the Singleton bound \cite{NguyenThesis2009,LuKumar2005}
\begin{equation}
\duni(\rate)=  \norx d_S(\rate) \triangleq \norx\left(1+\left\lfloor \noblock\left(\notx-\frac{\rate}{\consize}\right)\right\rfloor\right).  \label{eq:singleton}
\end{equation}
Note that $\duni(R)$ is also the outage diversity of systems with short-term power constraint $\sum_{\block=1}^\noblock \tr\left(\mat\power_\block\left(\incvec{\widehat\chmat}{\block-\del}\right)\right) \leq \noblock\power$ \cite{NguyenGuillenRasmussen2010it}.    The optimal DMT and RDT of systems with long-term power constraints are investigated in the subsequent sections.

\section{Diversity-Multiplexing Tradeoff of Gaussian Input Channels}
\label{sec:divers-mult-trad}

\subsection{Causal CSIT}
 When causal CSIT is available, the achievable outage diversity of a MIMO block-fading channel with long-term power constraint is given in the following theorem.  
\begin{theorem}
  \label{the:imper-CSIT-MIMO-DMT} Consider transmission over the MIMO block-fading channel in \eqref{eq:channel_model} with multiplexing gain $\mgain \in [0, \mindim]$.   Assume that mismatched CSIT $\incvec{\widehat\chmat}{\block-\del}$ as modelled in \eqref{eq:mismatch_model}  is available at the transmission of block $\block$, where $\del > 0$ is the delay in obtaining CSIT.  Then the optimal DMT is lower bounded by 
  \begin{equation}
    d(\mgain, \de) \geq \min_{\mat k \in \left\{0, \ldots, \mindim\right\}^\noblock}   d_{\mat k}, 
  \end{equation}
where $d_{\mat k}$ is given as
\begin{equation}
  \label{eq:opt_dmt_mimo_causal}
  \begin{cases}
    {\rm Infimum~}& \noblock\maxdim\mindim\de+ \sum_{\block=1}^\noblock \sum_{i=1}^\mindim (2i-1+\maxdim-\mindim)\overline\npfgain_{\block, i} \\
    {\rm Subject~to~} 
    &\sum_{\block=1}^\noblock\sum_{i=1}^\mindim \left(\npower_{\block}\left(\incvec{\widehat{\mat\npfgain}}{\block-\del}\right)- \de- \overline\npfgain_{\block, i}\right)_+ < \noblock\mgain\\
    & \overline\npfgain_{\block, 1} \geq \ldots \geq \overline\npfgain_{\block, \mindim-k_\block} \geq 0 \geq \overline\npfgain_{\block, \mindim-k_\block+1} \geq \ldots\geq \overline\npfgain_{\block,\mindim} \geq - \de\\
    &\npower_{\block}\left(\incvec{\widehat{\mat\npfgain}}{\block-\del}\right) = 1+ \maxdim\mindim\de(\block-\del)_+ + \sum_{\block'=1}^{\block-\del} \sum_{i=\mindim-k_{\block'}+ 1}^\mindim (2i-1+\maxdim-\mindim) \overline\npfgain_{\block', i}
  \end{cases}
\end{equation}
\end{theorem}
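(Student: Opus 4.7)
The plan is to follow the Zheng--Tse large-deviations framework adapted to the block-fading setting with causal mismatched CSIT, working throughout with eigenvalue exponents. For each block $\block$, let $\overline\npfgain_{\block,i}$ denote the $i$th ordered SNR exponent of $\chmat_\block\chmat_\block^\dag$ (so that the corresponding eigenvalue is $\doteq \power^{-\overline\npfgain_{\block,i}}$), and let $\widehat\npfgain_{\block,i}$ denote the analogous exponent of $\widehat\chmat_\block\widehat\chmat_\block^\dag$. The mismatch model~\eqref{eq:mismatch_model} together with the rescaling~\eqref{eq:norm_CSIT_noise} shows that, conditional on $\widehat\chmat_\block$, the true channel is concentrated around $\widehat\chmat_\block$ with complex Gaussian fluctuations of per-entry variance $\power^{-\de}$. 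This explains the constraint $\overline\npfgain_{\block,i}\geq -\de$: the eigenvalues of $\chmat_\block\chmat_\block^\dag$ cannot exceed the CSIT noise floor $\power^{-\de}$ by a polynomial factor except on an event of exponentially vanishing probability.

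To establish the lower bound I would exhibit an explicit causal power allocation achieving the stated exponent. Conditioned on the mismatched CSIT, I allocate equal power across transmit antennas by setting $\mat\power_\block\doteq\power^{\npower_\block}\mat{I}_{\notx}$, with $\npower_\block$ precisely the recursive quantity defined in~\eqref{eq:opt_dmt_mimo_causal}. The deterministic boost $\maxdim\mindim\de(\block-\del)_+$ and the random boost $\sum_{\block'=1}^{\block-\del}\sum_{i=\mindim-k_{\block'}+1}^{\mindim}(2i-1+\maxdim-\mindim)\widehat\npfgain_{\block',i}$ are calibrated so that, when integrated against the joint density of the CSIT eigenvalue exponents (which by standard complex-Gaussian calculations scales as $\power^{-\maxdim\mindim\de}$ times $\power^{-\sum_{\block',i}(2i-1+\maxdim-\mindim)(\widehat\npfgain_{\block',i})_+}$ on the positive-exponent region), the average transmit power exponent is exactly $1$ per block, so the long-term constraint~\eqref{eq:pow_constraint} is met in the exponential sense.

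With the power allocation fixed, the outage computation proceeds by Laplace's method. For Gaussian inputs, the per-block mutual information~\eqref{eq:info_MIMO_Gauss} satisfies $\infoX\bigl(\chmat_\block\mat\power_\block^{\half}\bigr)\doteq \sum_{i}(\npower_\block-\de-\overline\npfgain_{\block,i})_+\log\power$, so the outage event in~\eqref{eq:pout} becomes, to exponential order, $\sum_{\block,i}(\npower_\block-\de-\overline\npfgain_{\block,i})_+<\noblock\mgain\log\power$. Conditioning on the CSIT and integrating against the law of the true exponents $\overline\npfgain$ --- which contributes $\sum_{\block,i}(2i-1+\maxdim-\mindim)\overline\npfgain_{\block,i}+\noblock\maxdim\mindim\de$ to the negative exponent on the feasible region $\overline\npfgain_{\block,i}\geq -\de$ --- and applying Varadhan's lemma yields an outage exponent equal to the infimum in~\eqref{eq:opt_dmt_mimo_causal}. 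The minimum over $\mat k\in\{0,\ldots,\mindim\}^\noblock$ accounts for the different sign patterns of the $\overline\npfgain_{\block,i}$: the coordinate $k_\block$ records how many of these exponents are nonpositive, which determines both the shape of the rate function in that block and the correct range of the inner summation in $\npower_\block$.

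The main obstacle I expect is the coupled bookkeeping of the two joint densities. One must simultaneously (i) write down the rate function of the CSIT exponents $\widehat\npfgain_{\block',i}$ for $\block'\leq\block-\del$, (ii) integrate against the conditional law of the true exponents $\overline\npfgain_{\block,i}$ given the CSIT, and (iii) verify that the chosen $\npower_\block$ cancels the $\widehat\npfgain$ contributions exactly between the power-constraint and outage exponents, so that the final rate function depends only on $(\overline\npfgain,\mat k)$. A secondary subtlety is enforcing~\eqref{eq:pow_constraint} as a genuine expectation rather than only in the exponent, which requires a standard truncation argument to control the tails of the CSIT distribution; this needs care because $\npower_\block$ is unbounded in the $\widehat\npfgain$.
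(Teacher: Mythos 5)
Your overall skeleton --- isotropic power allocation whose exponent is the largest one consistent with \eqref{eq:pow_constraint} under Varadhan's lemma applied to the CSIT eigenvalue density, followed by a Laplace-type evaluation of the outage integral --- is the same as the paper's, and your reading of where the sign pattern encoded by $\mat k$ enters is broadly right. However, there is a genuine gap exactly at the point you flag as ``the main obstacle'': the coupling between the CSIT exponents $\widehat\npfgain_{\block,i}$ and the true-channel exponents $\overline\npfgain_{\block,i}$. First, a correction: $\overline\npfgain_{\block,i}$ in the theorem is the exponent of the $i$th eigenvalue of $\overline\chmat_\block\overline\chmat_\block^\dag$ with $\overline\chmat_\block=\frac{\sqrt2}{\errsig}\chmat_\block$ as in \eqref{eq:norm_CSIT_noise}, not of $\chmat_\block\chmat_\block^\dag$; the constraint $\overline\npfgain_{\block,i}\geq-\de$ is simply the statement that the eigenvalues of $\chmat_\block\chmat_\block^\dag$ do not grow polynomially in $\power$, not that they are capped at the noise floor $\power^{-\de}$ (generically they are $\Theta(1)\gg\power^{-\de}$). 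Your own outage computation $(\npower_\block-\de-\overline\npfgain_{\block,i})_+$ implicitly uses the normalized convention, so the opening definition is inconsistent with the rest of your argument.

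Second, and more substantively, integrating ``against the conditional law of the true exponents given the CSIT'' requires the structural lemma (from Kim and Caire) that the paper uses and that your proposal never supplies: conditional on $\widehat\chmat_\block$, the matrix $\overline\chmat_\block$ is non-central Wishart with non-centrality $\frac{\sqrt2}{\errsig}\widehat\chmat_\block$, and asymptotically its eigenvalue exponents split according to whether $\widehat\npfgain_{\block,i}\gtrless\de$: for $\widehat\npfgain_{\block,i}<\de$ the true exponent is locked to $\overline\npfgain_{\block,i}=\widehat\npfgain_{\block,i}-\de$, while for $\widehat\npfgain_{\block,i}\geq\de$ it is effectively a fresh central-Wishart exponent $\overline\npfgain_{\block,i}\geq0$ with rate $(2i-1+\maxdim-\mindim)\overline\npfgain_{\block,i}$. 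This is what produces the partition into $(\mindim+1)^\noblock$ regions $\mathcal{A}_{\mat k}$, justifies the outer minimization over $\mat k$, and lets one substitute $\widehat\npfgain_{\block,i}=\de$ for $i\leq\mindim-k_\block$ (which is where $\noblock\maxdim\mindim\de$ actually comes from --- it is the exponent of the CSIT density after optimization, not of the marginal law of $\overline\npfgain$ as you suggest) and $\widehat\npfgain_{\block',i}=\overline\npfgain_{\block',i}+\de$ inside the power exponent, turning the implementable allocation $\npower_\block=1+\sum_{\block'=1}^{\block-\del}\sum_{i=1}^{\mindim}(2i-1+\maxdim-\mindim)\widehat\npfgain_{\block',i}$ into the restricted-sum form appearing in \eqref{eq:opt_dmt_mimo_causal}; as written, your description of the allocation conflates these two forms. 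Without this lemma the ``coupled bookkeeping'' you defer cannot be carried out, so the proposal does not yet constitute a proof.
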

\begin{proof}
  See appendix \ref{sec:causal-csit-proof}. 
\end{proof}
The optimisation problem in \eqref{eq:opt_dmt_mimo_causal} is linear, and can therefore easily be solved by a simplex algorithm\footnote{The problem is solved by first performing a minimisation, with inequalities constraints $<$ replaced by $\leq$, then the discontinuous points (if any) are taken care of based on the solution obtained.}.  The achievable DMT of a 2-by-2 MIMO block-fading channel with $\noblock = 4$ and causal CSIT with delay $\del=3$ is illustrated in Figure \ref{fig:DMT_MIMO_u3}.   The figure shows that the optimal DMT of a MIMO block-fading can be improved by increasing the quality of causal CSIT.  However, causal CSIT does not provide any gain in outage diversity for multiplexing gains $\mgain \geq 1.25$.  Outage diversity gains at high multiplexing gains $\mgain$ are only observed in systems with small $\del$, as illustrated in Figure \ref{fig:DMT_MIMO_del}.  Figure \ref{fig:DMT_MIMO_del} shows the achievable DMT of the same MIMO block-fading channel ($\notx=\norx=2, \noblock=4$) with perfect causal CSIT  for various delay $\del$, illustrating the significant impact of CSIT delay on the asymptotic outage performance.  Note in Figure \ref{fig:DMT_MIMO_u3} that for a given CSIT delay $\del$, a finite $\de$ is sufficient to achieve the optimal outage diversity of a corresponding system with perfect CSIT.  For example, with $\del=3$, $\de = 0.5$ exhibits the same DMT as systems with $\de = \infty$ for multiplexing gains $\mgain \geq 0.5$; also note that $\de =1$ is sufficient in terms of outage diversity  for all $\mgain \in [0, \mindim]$.  The $\de$ threshold required for obtaining the optimal DMT of a perfect causal CSIT will be analytically shown in the sequel for the vector channel case. 

Due to the complexity of the optimisation problem in \eqref{eq:opt_dmt_mimo_causal}, the impact of the various parameters such as $\de$ and $\del$ on the DMT curve is difficult to deduce in general.  We therefore consider the vector channel case ($\mindim=1$) as given in the following. 
\begin{cor}
\label{cor:DMT_SISO_causal_mismatch}
  Consider transmission with multiplexing gain $\mgain \in [0, 1]$ over the block-fading channel in \eqref{eq:channel_model} where $\mindim=1$.  Assume that causal mismatched CSIT $\incvec{\widehat\chmat}{\block-\del}$ as modelled in \eqref{eq:mismatch_model} is available at the transmission of block $\block$, where $\del$ is the delay in obtaining CSIT.  The optimal DMT is lower bounded by\footnote{In this case, it can be shown that the following achievable DMT is optimal following the same steps as in Theorem \ref{the:imper-CSIT-MIMO-DMT}.  Here the outage diversity is simpler to characterize and the converse is obtained since the  argument $\mat S \mat S^\dag$ of the mutual information expression in \eqref{eq:info_MIMO_Gauss} reduces to a scalar value.}
  \begin{equation}
    \label{eq:mismatch_DMT_causal_SISO}
    d(\mgain, \de)\geq
    \begin{cases}
      \maxdim\noblock(1-\mgain), & \noblock-\del-\noblock\mgain \leq 0\\
      \maxdim \sum_{i=1}^{\noblock-\lfloor\noblock\mgain\rfloor} a^\star_i,& {\rm otherwise}. 
   \end{cases}
 \end{equation}
 where $a^\star_i$'s are defined as follows, 
\begin{equation}
\label{eq:recursive_a}
  a^\star_i =
  \begin{cases}
1-\noblock\mgain+\lfloor\noblock\mgain\rfloor, &i =1\\     
1, &i=2, \ldots, \del\\
 a^\star_{i-1}+ \maxdim \min\left\{a^\star_{i-u}, \de \right\}%1+ \maxdim \sum_{i'=1}^{i-\del} \min\left\{a_{i'}, \de \right\}, 
&i=u+1, \ldots, \noblock-\lfloor \noblock\mgain\rfloor
\end{cases}
\end{equation}
\end{cor}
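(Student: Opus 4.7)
The plan is to specialize the optimization in Theorem~\ref{the:imper-CSIT-MIMO-DMT} to $\mindim=1$ and solve it explicitly. Writing $\omega_b:=\overline\npfgain_{b,1}$ and using that $(2i-1+\maxdim-\mindim)$ collapses to $\maxdim$, the objective becomes $\noblock\maxdim\de+\maxdim\sum_{b=1}^{\noblock}\omega_b$; the binary variable $k_b$ chooses $\omega_b\in[0,\infty)$ (non-adaptive slot) or $\omega_b\in[-\de,0]$ (sacrificed slot); the recursion for the power exponent simplifies to $\npower_b=1+\maxdim\de(b-\del)_+ +\maxdim\sum_{b'=1}^{b-\del}\mathbf{1}\{k_{b'}=1\}\,\omega_{b'}$; and the outage-budget constraint retains the form $\sum_b(\npower_b-\de-\omega_b)_+<\noblock\mgain$.

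I would then split on the sign of $\noblock-\del-\noblock\mgain$. When $\noblock-\del-\noblock\mgain\leq 0$ the CSIT delay is too long to rescue any block by adaptive boost: setting $k_b=0$ and $\omega_b=1-\mgain$ for all $b$ is feasible and attains the uniform-power objective $\maxdim\noblock(1-\mgain)$. A short exchange argument shows that deviating from the uniform allocation either violates the outage budget or strictly increases $\sum_b\omega_b$, matching the first branch of \eqref{eq:mismatch_DMT_causal_SISO}.

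In the complementary regime I expect the infimum to be realized by (i) placing the $\lfloor\noblock\mgain\rfloor$ sacrificed slots in the earliest positions, so that the CSIT from those slots feeds into $\npower_b$ of as many later slots as possible after the delay $\del$, and (ii) absorbing the fractional slack $\noblock\mgain-\lfloor\noblock\mgain\rfloor$ at the first active slot. This directly yields $a^\star_1=1-\noblock\mgain+\lfloor\noblock\mgain\rfloor$. The first $\del$ active slots have $\npower_b=1$ in the recursion (no past CSIT has yet cleared the delay), so $a^\star_2=\cdots=a^\star_\del=1$. For $i>\del$, evaluating $\npower_b$ and the optimal $\omega_b$ at the $i$th active slot produces the recurrence $a^\star_i = a^\star_{i-1}+\maxdim\min\{a^\star_{i-\del},\de\}$, where the $\min$ records that the per-slot saving is jointly limited by the CSIT-noise floor $\de$ (from the domain $\omega_{b'}\geq -\de$) and by the previously established exponent $a^\star_{i-\del}$ of the feeding slot. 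Summing $\maxdim a^\star_i$ over the $\noblock-\lfloor\noblock\mgain\rfloor$ active slots gives the second branch of \eqref{eq:mismatch_DMT_causal_SISO}.

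The main obstacle will be rigorously justifying the ``sacrificed first'' arrangement and the resulting recurrence: I would do this by an exchange/induction argument that swaps any later $k_b=1$ with an earlier $k_{b'}=0$ and checks, via monotonicity of $a^\star_i$ in both feeding arguments, that the objective does not decrease. Two smaller issues I would address along the way are (i) replacing the strict outage inequality by ``$\leq$'' to access a simplex-style argument, then recovering the strict bound in the limit as the footnote on the linear program suggests, and (ii) the converse claimed in the corollary's footnote, which in the scalar channel setting reduces to a standard Laplace-method estimate of the outage probability after scalarizing \eqref{eq:info_MIMO_Gauss}.
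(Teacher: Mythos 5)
Your overall route is the same as the paper's: specialize the linear program of Theorem \ref{the:imper-CSIT-MIMO-DMT} to $\mindim=1$, split on the sign of $\noblock-\del-\noblock\mgain$, and identify an extremal configuration that sacrifices $\lfloor\noblock\mgain\rfloor$ early blocks and lets the remaining blocks follow the recursion \eqref{eq:recursive_a}. However, there is a concrete gap in your first case, and it traces back to a missing substitution. The paper's key simplification is to work with $a_\block=\overline\npfgain_{\block,1}+\de\geq 0$ rather than $\omega_\block=\overline\npfgain_{\block,1}$: this absorbs the additive constant $\noblock\maxdim\de$ into the objective (which becomes simply $\maxdim\sum_\block a_\block$), merges the two ranges selected by $k_\block$ into the single constraint $a_\block\geq 0$, removes the positive parts via $a_\block\leq\npower_\block$, and turns the power recursion into $\npower_\block=1+\maxdim\sum_{\block'\leq\block-\del}\min\{a_{\block'},\de\}$. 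Without this shift your case-1 argument fails: the candidate $k_\block=0$, $\omega_\block=1-\mgain$ for all $\block$ gives objective $\noblock\maxdim\de+\maxdim\noblock(1-\mgain)=\maxdim\noblock(1-\mgain+\de)$, \emph{not} $\maxdim\noblock(1-\mgain)$, and it is not even feasible for small $\del$ (e.g.\ $\noblock=4$, $\del=1$, $\mgain=0.9$, $\maxdim=1$: the constraint requires $4\de>6\de$). The value $\maxdim\noblock(1-\mgain)$ is only reached by driving $\omega_\block$ down to $-\de$ on $\lfloor\noblock\mgain\rfloor$ blocks so as to cancel the constant $\noblock\maxdim\de$. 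Moreover, your logic in this case is inverted: to deduce the corollary from the theorem you need a \emph{lower} bound on the infimum over all feasible points, which in the shifted variables follows in one line from $\npower_\block\geq 1$ (so $\sum_\block a_\block>\sum_\block\npower_\block-\noblock\mgain\geq\noblock(1-\mgain)$); exhibiting one candidate and arguing its optimality by exchange is the wrong direction unless the exchange argument is carried out globally, and your candidate is not the optimizer in any event.

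Two smaller points. First, your stated intuition for placing the sacrificed slots first --- ``so that the CSIT from those slots feeds into $\npower_\block$ of as many later slots as possible'' --- is backwards: a sacrificed slot has $\min\{a_{\block'},\de\}=0$ and feeds \emph{nothing} into later powers; the reason to sacrifice early slots is that their $\npower_\block$ is smallest, so declaring them in outage consumes the least of the budget $\noblock\mgain$. Second, your case-2 recursion and the values $a^\star_1=1-\noblock\mgain+\lfloor\noblock\mgain\rfloor$, $a^\star_2=\cdots=a^\star_\del=1$ do match the paper's configuration, but only if the $a^\star_i$ are understood as the shifted variables $\overline\npfgain_{\block,1}+\de$; as written, your text silently switches between $\omega$ and $a$ conventions between the two cases, which is exactly where the case-1 error crept in.
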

\begin{proof}
See Appendix \ref{sec:causal-csit-vector}. 
\end{proof}
The expression of $d(\mgain, \de)$  in \eqref{eq:mismatch_DMT_causal_SISO} confirms the thresholds observed in Figure \ref{fig:DMT_MIMO_u3} for the vector channel case.  In particular,  no gain in outage diversity is obtained  by causal CSIT if the delay $\del$ satisfies $\del \geq \noblock(1-\mgain)$.  Furthermore, it also follows from \eqref{eq:mismatch_DMT_causal_SISO} and \eqref{eq:recursive_a} that increasing $\de$ beyond $a^\star_{\noblock-\lfloor\noblock\mgain\rfloor-\del}$ does not increase the outage diversity gain.  Equivalently, instead of having perfect CSIT, a system whose CSIT error decays with $\power$ as $\power^{- a^\star_{\noblock-\lfloor \noblock\mgain\rfloor- \del}}$ is sufficient in terms of outage diversity.   %This is in contrast to systems with predictive CSIT, as studied in the following subsection.  

\subsection{Predictive CSIT}
\label{sec:predictive-csit}
For systems with predictive CSIT, where mismatched CSIT of blocks up to $\block+t$ is available at the transmission of block $\block$, the achievable DMT is given as follows. 
\begin{theorem}
  \label{the:imperfect-predict} Consider transmission over the MIMO block-fading channel in \eqref{eq:channel_model} with multiplexing gain $\mgain \in [0, \mindim]$.  Assume that predictive mismatched CSIT $\incvec{\widehat\chmat}{\block+t}$ as modelled in \eqref{eq:mismatch_model} is available at the transmission of block $\block$, where $t \geq 0$ is the number of CSIT blocks predicted.  Then the optimal outage diversity is lower bounded by 
  \begin{equation}
    d(\mgain, \de) \geq \min_{\mat k \in \left\{0, \ldots, \mindim\right\}^\noblock} d_{\mat k}, 
 \end{equation}
where $d_{\mat k}$ is obtained by
\begin{equation}
  \begin{cases}
    {\rm Infimum~}& \sum_{\block=1}^\noblock\de(\maxdim-k_\block)(\mindim-k_\block) + \sum_{\block=1}^\noblock\sum_{i=1}^{\mindim-k_\block}(2i-1+\maxdim-\mindim)\overline\npfgain_{\block, i}\\
    {\rm Subject~to~}&\sum_{\block=1}^\noblock\sum_{i=1}^{\mindim-k_\block}\left(\npower_\block\left(\incvec{\widehat{\mat\npfgain}}{\block+t} \right)- \de-\overline\npfgain_{\block, i}\right)_+ + \sum_{\block=1}^\noblock k_\block \npower_{\block}(\mat k) < \noblock\mgain\\
    &\overline\npfgain_{\block, i} \geq 0, i=1, \ldots, \mindim-k_\block, 
  \end{cases}
\end{equation}
where $\npower_{\block}(\mat k) = 1+ \de\sum_{\block'=1}^{\min\{\noblock, \block+t\}} (\maxdim-k_{\block'})(\mindim-k_{\block'})$. 
\end{theorem}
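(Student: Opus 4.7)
The plan is to adapt the Laplace-principle / large-deviation framework used for Theorem~\ref{the:imper-CSIT-MIMO-DMT} to the predictive-CSIT setting, where at the start of block $\block$ the transmitter knows $\incvec{\widehat{\chmat}}{\block+t}$ rather than only the delayed estimates. First, I would cast the problem in terms of SNR exponents: write the ordered eigenvalues of $\widehat{\chmat}_\block\widehat{\chmat}_\block^\dag$ as $\power^{-\widehat\npfgain_{\block,i}}$ and let the integer $k_\block$ count how many of these eigenvalues lie below the CSIT noise level $\errsig^2 = \power^{-\de}$. Parallel to the computations in the proof of Theorem~\ref{the:imper-CSIT-MIMO-DMT}, the joint high-SNR density of the $\mindim - k_\block$ resolvable exponents $\overline\npfgain_{\block, 1}, \ldots, \overline\npfgain_{\block, \mindim-k_\block}$ behaves as
\[
\power^{-\de(\maxdim-k_\block)(\mindim-k_\block) - \sum_{i=1}^{\mindim-k_\block}(2i-1+\maxdim-\mindim)\overline\npfgain_{\block,i}},
\]
where the $\de(\maxdim-k_\block)(\mindim-k_\block)$ factor is the exponent of the probability that $k_\block$ eigenvalues of $\widehat{\chmat}_\block\widehat{\chmat}_\block^\dag$ fall below $\errsig^2$.

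Second, I would propose a power allocation $\mat\power_\block\bigl(\incvec{\widehat{\chmat}}{\block+t}\bigr)$ that depends on the available CSIT only through the type vector $\mat k$ induced by the already-revealed blocks. In block $\block$ the transmitter spends total power of order $\power^{\npower_\block(\mat k)}$ along the $\mindim - k_\block$ resolvable eigen-directions of $\widehat{\chmat}_\block$. Substituting into the Gaussian mutual information~\eqref{eq:info_MIMO_Gauss} and expanding, each block's mutual information splits, as $\power \to \infty$, into a contribution $(\npower_\block(\mat k) - \de - \overline\npfgain_{\block,i})_+ \log_2 \power$ from each of the $\mindim - k_\block$ resolvable directions, plus a contribution of order $\npower_\block(\mat k) \log_2 \power$ from each of the $k_\block$ unresolvable directions, since the true channel in those directions is set by $\mat{E}_\block$ rather than by $\widehat{\chmat}_\block$. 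The specific combinatorial form of $\npower_\block(\mat k)$ appearing in the statement is then calibrated so that the extra power paid under the rare-but-bad CSIT type $\mat k$ does not exceed, in exponent, the probability of that type, and hence the long-term constraint~\eqref{eq:pow_constraint} holds as $\power \to \infty$.

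Third, conditioning on $\mat k$, the outage event reduces to the mutual-information inequality stated as the second constraint in the theorem. Combining the conditional density from the first step with Varadhan's lemma, as in~\cite{ZhengTse2003}, the outage probability contributed by each $\mat k$ has exponent equal to the infimum $d_{\mat k}$ of the displayed linear programme; a union bound over the finitely many $\mat k \in \{0,\ldots,\mindim\}^\noblock$ then produces the announced lower bound.

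The step I expect to be the main obstacle is verifying the long-term power constraint in the second step. Because the allocation at block $\block$ depends on the type indices $k_1, \ldots, k_{\min\{\noblock,\block+t\}}$, the accumulation of excess power required to accommodate unlikely future-block types grows with $t$, and one must show that for every $\mat k$ the probability exponent $\sum_{b'=1}^{\min\{\noblock,\block+t\}} \de(\maxdim-k_{b'})(\mindim-k_{b'})$ strictly dominates the excess power exponent driven by $\npower_\block(\mat k)$. This balancing is what motivates the particular closed form of $\npower_\block(\mat k)$ in the statement; once it is checked, the remainder of the argument follows the causal-CSIT case.
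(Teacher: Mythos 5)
Your overall skeleton --- the SNR-exponent change of variables, the partition of CSIT realisations into type classes $\mat k$, the Laplace/Varadhan evaluation of each class, and the union bound over the finitely many types --- is the same as the paper's, and the displayed formulas you quote are the correct ones. The gap is in the justification you attach to them: you have the two families of eigendirections swapped. In the region indexed by $\mat k$ it is the $\mindim-k_\block$ \emph{smallest} estimated eigenvalues that fall below the noise floor $\errsig^2=\power^{-\de}$, not $k_\block$ of them; the probability exponent of that event is $\de\sum_{i=1}^{\mindim-k_\block}(2i-1+\maxdim-\mindim)=\de(\maxdim-k_\block)(\mindim-k_\block)$, whereas the event you describe ($k_\block$ eigenvalues below $\errsig^2$) has exponent $\de\,k_\block(k_\block+\maxdim-\mindim)$, which is not the prefactor you wrote down. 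Likewise, the directions whose true gain ``is set by $\mat E_\block$'' are precisely the noise-dominated ones: their true eigenvalue is $\doteq\power^{-\de-\overline\npfgain_{\block,i}}$ with $\overline\npfgain_{\block,i}\ge 0$ a fresh Wishart exponent, so they contribute $(\npower_\block(\mat k)-\de-\overline\npfgain_{\block,i})_+$, not $\npower_\block(\mat k)$; the full $\npower_\block(\mat k)$ contribution comes from the $k_\block$ \emph{reliably estimated} directions whose typical eigenvalue is of order one. Carrying out the integration with the attributions as you state them would not reproduce the linear programme in the theorem.

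A second, independent issue is the power allocation. You allocate $\power^{\npower_\block(\mat k)}$ as a function of the discrete type only, whereas the paper uses the continuous rule $\npower_\block\bigl(\incvec{\widehat{\mat\npfgain}}{\block+t}\bigr)=1+\sum_{\block'=1}^{\min\{\noblock,\block+t\}}\sum_{i=1}^{\mindim}(2i-1+\maxdim-\mindim)\widehat\npfgain_{\block',i}$ obtained from Varadhan's lemma, which only collapses to $\npower_\block(\mat k)$ at the optimiser. The distinction matters when eliminating the CSIT exponents $\widehat\npfgain_{\block,i}$ of the reliable directions: with the continuous rule, decreasing $\widehat\npfgain_{\block,i}$ simultaneously decreases the objective and every term of the outage constraint (because the allocated power itself decreases), so the infimum sits at $\widehat\npfgain_{\block,i}=0$ and the term $\sum_\block k_\block\npower_\block(\mat k)$ emerges cleanly. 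With your type-constant power, decreasing $\widehat\npfgain_{\block,i}$ \emph{increases} the direct term $(\npower_\block(\mat k)-\widehat\npfgain_{\block,i})_+$, so that monotonicity argument fails and the infimum of your scheme's exponent over the outage set need not equal the stated $d_{\mat k}$. Your balancing of $\Pr\{\mat k\}$ against $\power^{\npower_\block(\mat k)}$ for the long-term constraint is correct, but you need either to adopt the CSIT-dependent power rule or to supply a separate argument that the reliable-direction exponents can be pinned to zero.
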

\begin{proof}
  See Appendix \ref{sec:pred-csit-proof}.  
\end{proof}

The achievable DMT of  a 2-by-2 MIMO block-fading channel with $\noblock = 4$ is illustrated in Figures \ref{fig:DMT_MIMO_pred_del} and \ref{fig:DMT_MIMO_pred_va_de}.  Figure \ref{fig:DMT_MIMO_pred_del} clearly demonstrates the benefits of predicting CSIT.  With $t=0$, i.e., when only CSIT of the current transmission block is known, significant gains in outage diversity is observed, as already pointed out in \cite{KimCaire2009}.  Further gains in outage diversity are possible by increasing $t$ up to $\noblock-1$.   Note that comparing to a system with $t$ predictive blocks,  predicting $t+1$ fading blocks provides additional CSIT information only for the first $\noblock-t-1$ transmission blocks.  Therefore, the additional outage diversity gains offered by predicting $t+1$ blocks compared to that of predicting $t$ blocks decreases with $t$, as observed in Figure \ref{fig:DMT_MIMO_pred_del}.  An alternative way to improve the outage diversity is to provide better CSIT, as illustrated in Figure \ref{fig:DMT_MIMO_pred_va_de}.  In agreement with the results in \cite{KimCaire2009}, increasing $\de$ significantly improves the outage diversity.  In fact, even with $\de=1$, the outage diversity is so large that it can be considered infinite for all practical purposes.  In contrast to systems with causal CSIT, $d(\mgain, \de)$ for systems with predictive CSIT is strictly increasing with $\de$.  With $\de \to \infty$, as in systems where the CSIT quality grows exponentially with SNR, $d(\mgain, \de)$ reaches infinity, leading to a positive delay-limited capacity in many scenarios \cite{HanlyTse1998}.  

The theorem shows the impact predictive CSIT and its quality on the asymptotic outage performance, leading to essential system design guidelines.  Particularly, for systems with limited resources for channel estimation, it may be better to have high quality CSIT for a few future blocks, rather than predicting far into the future with low quality estimations.

\section{Rate-Diversity Tradeoff of Discrete Input Channels}
\label{sec:rate-divers-trad}
 In this section, we concentrate on the more practical case where the input constellation is discrete with a fixed and finite constellation $\conste$ of size $2^\consize$.  Systems with causal CSIT are studied in Section \ref{sec:causal-csit}, and then systems with predictive CSIT is studied in Section \ref{sec:acausal-csit}.  
\subsection{Causal CSIT}
\label{sec:causal-csit}
\begin{theorem}[Causal mismatched CSIT]
  \label{the:RDT_causal} Consider transmission at rate $\rate \in [0, \notx\consize]$ over the MIMO block-fading channel given in \eqref{eq:channel_model} with input constellation $\conste$ of size $2^\consize$.  Assume that mismatched CSIT $\incvec{\widehat{\chmat}}{\block-\del}$ as modelled in \eqref{eq:mismatch_model} is available at transmission block $\block$, where $\del >0$ is the delay in obtaining the CSIT.  With the long-term power constraint in \eqref{eq:pow_constraint}, the optimal RDT is given by 
  \begin{equation}
\label{eq:RDT_MIMO_causal_theorem}
    d(\rate, \de)= \norx\notx\sum_{\block=1}^{\hat \block} a_\block + \norx(d_S(\rate)- \hat\block\notx) a_{\hat\block+ 1},    
  \end{equation}
where
\begin{align}
  d_S(\rate)&= 1+ \left\lfloor \noblock\left(\notx-\frac{\rate}{\consize}\right)\right\rfloor\\
  \hat\block &= \left\lfloor \frac{d_S(\rate)}{\notx}\right\rfloor\\
  a_\block &=
  \begin{cases}
    1, &\block =1, \ldots, \del\\
    a_{\block-1} + \notx\norx\min\left\{\de, a_{\block-\del}\right\}, &\block = u+1, \ldots, \hat\block + 1.
  \end{cases}
\end{align}
\end{theorem}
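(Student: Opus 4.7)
The plan is to adapt the SNR-exponent analysis used for the DMT results in Theorems~\ref{the:imper-CSIT-MIMO-DMT}--\ref{the:imperfect-predict} to the discrete-input setting, where the outage event is governed by the Singleton bound rather than by the Gaussian-input mutual information. First I would change variables to the exponent domain by writing $|\fgain_{\block,i,j}|^2 \doteq \power^{-\alpha_{\block,i,j}}$ for the true channel entries and $|\fgainCSIT_{\block,i,j}|^2 \doteq \power^{-\hat\alpha_{\block,i,j}}$ for the CSIT entries. From~\eqref{eq:mismatch_model}--\eqref{eq:norm_CSIT_noise}, the CSIT noise has SNR exponent $\de$, so the CSIT resolves the true exponent only up to level $\de$: any true exponent strictly greater than $\de$ is asymptotically indistinguishable from $\de$ at the transmitter. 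Under any causal rule, the power allocated to block $\block$ then takes the form $\npower_\block \doteq \power^{\eta_\block}$ for some $\eta_\block$ measurable with respect to $\incvec{\hat\alpha}{\block-\del}$.

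Next I would characterise the outage event for discrete input. Plugging~\eqref{eq:MIMO_info} into~\eqref{eq:pout} and using that each transmit antenna contributes at most $\consize$ bits per channel use, the high-SNR outage event reduces to requiring at least $\diverS(\rate)=1+\lfloor\noblock(\notx-\rate/\consize)\rfloor$ transmit antennas, counted across the $\noblock$ blocks, to have their effective SNR below threshold (this is the Singleton-style combinatorial reduction carried out for the $\del=-\noblock$ case in~\cite{KimNguyenGuillen2009}). A single ``bad antenna'' event at block $\block$ under power exponent $\eta_\block$ has probability $\doteq\power^{-\norx\eta_\block}$, since the relevant column of $\chmat_\block$ is $\chi^2_{2\norx}$-distributed.

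For achievability I would propose the causal allocation
\begin{equation*}
\npower_\block \;=\; \power\cdot\prod_{\block'=1}^{\block-\del}\power^{\,\notx\norx\min\{\de,\,a_{\block'}\}\,\openone{\hat{\mathcal{B}}_{\block'}}},
\end{equation*}
where $\hat{\mathcal{B}}_{\block'}$ is the CSIT-side event that every antenna of block $\block'$ is below threshold. Using the asymptotic $\Pr(\hat{\mathcal{B}}_{\block'}) \doteq \power^{-\notx\norx\min\{\de,a_{\block'}\}}$ (the minimum arises because CSIT cannot discriminate among exponents above $\de$), a telescoping expectation shows $\expectation{}{\npower_\block}\doteq\power$, so~\eqref{eq:pow_constraint} is met. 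On the worst-case event that blocks $1,\ldots,\block-\del$ were all in outage, the exponent of $\npower_\block$ satisfies exactly the recursion $a_\block = a_{\block-1}+\notx\norx\min\{\de,a_{\block-\del}\}$ stated in the theorem. Nature's cheapest outage configuration is then to drive the first $\hat\block$ blocks fully into outage and place the remaining $\diverS(\rate)-\hat\block\notx$ bad antennas in block $\hat\block+1$, yielding the claimed exponent $\norx\notx\sum_{\block=1}^{\hat\block} a_\block + \norx(\diverS(\rate)-\hat\block\notx)a_{\hat\block+1}$.

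For the converse I would show that no causal allocation compatible with~\eqref{eq:pow_constraint} can attain a larger exponent: conditional on an outage configuration in the dominant set, the power exponent at block $\block$ is at most $a_\block$, because the delay $\del$ hides blocks $\block-\del+1,\ldots,\block$ from the transmitter and the CSIT-quality cap $\de$ limits the detectable depth of any past outage. The main obstacle will be this converse bound on $\eta_\block$: it requires simultaneously tracking the joint law of the $(\alpha,\hat\alpha)$-exponents, the long-term budget which couples $\eta_\block$ across all blocks, and the fact that any power spent beyond level $\de$ to compensate for a suspected past outage is wasted because the CSIT cannot tell ``very bad'' from ``moderately bad.'' The $\del=-\noblock$ analysis in~\cite{KimNguyenGuillen2009} provides a template for this step, but it must be extended to the causal filtration generated by $\{\incvec{\hat\alpha}{\block-\del}\}$ and combined with an induction on $\block$ to close the recursion.
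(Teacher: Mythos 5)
There is a genuine gap in your achievability argument, and it sits exactly where you placed the main novelty: the event-triggered power rule. You boost the power of block $\block$ only when the CSIT indicates that \emph{every} antenna of a past block $\block'$ is below threshold (your event $\hat{\mathcal{B}}_{\block'}$). Nature can then reach outage without ever triggering a boost, by spreading the required $d_S(\rate)$ bad antennas so that no single block is fully bad (possible whenever $d_S(\rate)\leq \noblock(\notx-1)$, e.g.\ $\notx=\norx=2$, $\noblock=4$, $\rate/\consize=1.5$ gives $d_S=3$). Each bad antenna then costs only $\power^{-\norx}$ at the un-boosted exponent, so the outage probability under your scheme is $\dot\geq \power^{-\norx d_S(\rate)}$ --- the Singleton exponent $\duni(\rate)$ --- which is strictly smaller than the claimed $d(\rate,\de)$ whenever causal CSIT helps at all (in the example above, $6$ versus $14$ for $\del=1$, $\de\geq 1$). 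The paper avoids this by making the power exponent a \emph{continuous} function of the past CSIT, $\npower_\block=1+\sum_{\block'=1}^{\block-\del}\sum_{\tx,\rx}\widehat\npfgain_{\block',\tx,\rx}$ (justified by the Varadhan-lemma power accounting), so that any partial degradation of any past block raises all future power. The dominant configuration --- first $\hat\block$ blocks fully bad, then $d_S(\rate)-\hat\block\notx$ antennas of block $\hat\block+1$ --- is not assumed but is obtained as the solution of an explicit infimum over the outage set $\widehat\outset$, and it is that optimisation which produces the recursion $a_\block=a_{\block-1}+\notx\norx\min\{\de,a_{\block-\del}\}$.

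Two secondary points. First, your reduction of the discrete-input outage event to ``at least $d_S(\rate)$ antennas below threshold'' is the right destination, but it needs the argument the paper gives via the good sets $\goodset_{\block,\rx}$ and the bound $T\left(\cdot\right)\dot\leq 2^{\consize\notx}\consize(\notx-\kappa_\block)$, including the phase argument showing that a single above-threshold antenna in $\goodset_{\block,\rx}$ forces the corresponding exponential to vanish with probability 1. Second, for the converse the paper does not track the causal filtration directly; it uses a genie-aided reduction to $\notx$ parallel channels with $\norx$ receive antennas each, after which the same infimum computation gives the matching upper bound. Your sketched induction on $\block$ over the filtration may be workable, but the genie route is what closes the proof in the paper.
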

\begin{proof}
  See Appendix \ref{sec:proof-theor-refth-RDTcausal}. 
\end{proof}

The optimal rate-diversity tradeoff for a MIMO block-fading channel with $\noblock =4, \notx=\norx=2$ is illustrated in Figures \ref{fig:RDT_causal_mismatch_vsdel} and \ref{fig:RDT_causal_mismatch_vsde}.   Figure \ref{fig:RDT_causal_mismatch_vsdel} shows that the optimal outage diversity increases significantly with decreasing delay in getting CSIT.   Note  that similarly to the Gaussian input case, even perfect causal CSIT may not provide gains in terms  of outage diversity over the non-CSIT case when the transmission rate is sufficiently large.  The observation can be explained from Theorem \ref{the:RDT_causal} as follows.  The outage diversity $d(\rate, \de)$ coincides with that of a system with no CSIT when the $a_b$'s in \eqref{eq:RDT_MIMO_causal_theorem} are equal to 1.  Therefore, causal CSIT provides gains in terms of outage diversity if and only if $d_S(\rate) > \del\notx$, or equivalently when 
\begin{equation}
  \rate \leq \frac{\noblock-\del}{\noblock}\consize\notx.
\end{equation}
Meanwhile, Figure \ref{fig:RDT_causal_mismatch_vsde} shows that for a given delay $\del$, significant gains  in outage diversity can be obtained by improving the quality of CSIT (increasing $\de$).  Note from Theorem \ref{the:RDT_causal} that the  outage diversity does not improve with increasing $\de$ when $\de \geq a_{\left\lceil \frac{d_S(\rate)}{\notx}\right\rceil- \del}$.  In other words the optimal outage diversity of systems with perfect CSIT can be achieved with finite $\de$.  This agrees, and generalises, the result in \cite{NguyenRasmussenGuillenLetzepis2009it}, which shows that an ARQ system with a finite number of feedback bits can achieve the optimal outage diversity of that with infinitely many feedback bits.  The result is numerically illustrated in Figure \ref{fig:RDT_causal_mismatch_vsde} for a 2-by-2 MIMO block-fading channel with $\noblock=4, \del=2$ and 16-QAM input constellation, where $\de \geq 1$ is sufficient to achieve the outage diversity of perfect CSIT systems for all transmission rates.  
\subsection{Predictive CSIT}
\label{sec:acausal-csit}
When predictive CSIT $\incvec{\widehat\chmat}{ \block+t}$, for some $t \geq 0$, is available at transmission of block $\block$, the optimal RDT is determined as follows. 
\begin{theorem}[Predictive mismatched CSIT]
\label{the:predictive-csit-rdt}
  Consider transmission at rate $\rate$ over the MIMO block-fading channel in \eqref{eq:channel_model} using input constellation $\conste$ of size $2^\consize$.   Assume that mismatched CSIT $\incvec{\widehat\chmat}{\block+t}$ as modelled in \eqref{eq:mismatch_model} is available at the transmission of block $\block$, where $t \geq 0$ is the number of future blocks with CSIT.  With the long-term power constraint in \eqref{eq:pow_constraint}, the optimal RDT is 
  \begin{equation}
    \label{eq:RDT_predictive_theorem}
    d(\rate, \de) =
    \begin{cases}
      \norx d_S(\rate)(1+ \norx d_S(\rate)\de), &t \geq \hat\block\\
      \norx\left(d_S(\rate)+ \norx\de\left(\frac{(\hat\block-t)(\hat\block+t+1)}{2}\notx^2 +d_S(\rate)\left(d_S(\rate)- \notx(\hat\block-t)\right)\right)\right), &{\rm otherwise}, 
    \end{cases}
 \end{equation}
where $d_S(\rate)= 1+ \left\lfloor \noblock\left(\notx-\frac{\rate}{\consize}\right)\right\rfloor$ and $\hat\block = \left\lfloor \frac{d_S(\rate)}{\notx}\right\rfloor$. 
\end{theorem}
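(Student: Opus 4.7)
The plan follows the same two-step strategy as in Theorems~\ref{the:RDT_causal} and~\ref{the:imperfect-predict}: first, establish achievability via an explicit predictive power-allocation rule and bound its outage, then establish a matching converse via a Singleton-type argument on the number of blocks that must carry full rate. The starting point is the high-SNR reduction: with a discrete input $\conste$ of size $2^\consize$, the per-block mutual information $\infoX(\chmat_\block \mat\power_\block^{1/2})$ behaves essentially as a $\{0, \consize\notx\}$ step function of the normalised eigenvalue exponents of $\chmat_\block \mat\power_\block \chmat_\block^\dag$, so outage at rate $\rate$ is avoided iff at least $d_S(\rate)$ blocks are ``full.'' Writing $d_S(\rate) = \hat\block\,\notx + r$ with $0 \le r < \notx$ makes the bookkeeping natural: we need $\hat\block$ fully rescued blocks plus an extra $r$ dimensions on one more block.

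For achievability, I would construct a rule that at block $\block$ inspects $\widehat{\chmat}_1, \ldots, \widehat{\chmat}_{\min(\block+t,\noblock)}$, tentatively classifies each seen block as likely good or likely bad under the CSIT model \eqref{eq:mismatch_model}, and commits boost power $\power^{a_\block}$ only to a chosen ``rescue'' set sized to meet the Singleton threshold $d_S(\rate)$. The recursion for the exponents $a_\block$ mirrors the one in Theorem~\ref{the:RDT_causal}, but the lookahead $t$ now plays the opposite role that the delay $\del$ played in the causal case: blocks visible in the future can be skipped from the boost if they already look good, whereas blocks not yet visible must be protected via boost. Computing the expected power under the Gaussian CSIT-noise distribution shows that the long-term constraint \eqref{eq:pow_constraint} is met, while computing the outage event reduces to the probability that a block classified as likely good is actually bad, which scales as $\power^{-\norx^2 \de}$ per misclassified block: one factor $\norx$ from receive diversity and a further $\norx\notx\,\de$ from the resolution of the CSIT estimate across all $\notx$ eigen-directions.

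The case distinction in \eqref{eq:RDT_predictive_theorem} emerges naturally from the interplay between $\hat\block$ and $t$. When $t \ge \hat\block$, every transmission block already sees at least the first $\hat\block$ full blocks, so a rescue set can be committed from block~$1$ and each of the $d_S(\rate)$ selected slots enjoys the full boost exponent $\norx d_S(\rate)\,\de$, giving the diversity $\norx d_S(\rate)(1 + \norx d_S(\rate)\de)$. When $t < \hat\block$, the first $\hat\block - t$ transmission blocks must commit without seeing all rescue slots; their boost exponents grow incrementally, and the arithmetic identity $\sum_{k=t+1}^{\hat\block} k = \tfrac{(\hat\block-t)(\hat\block+t+1)}{2}$ tallies their combined contribution, while the residual term $d_S(\rate)\bigl(d_S(\rate) - \notx(\hat\block - t)\bigr) = d_S(\rate)(\notx t + r)$ collects the contributions from blocks seen through to their end. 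The converse reuses the Singleton-based characterisation of the outage event together with a hypothesis-testing-type bound showing that no causal-with-lookahead rule can make a genuinely bad block appear likely-good with probability better than $\power^{-\norx^2 \de}$, paralleling the techniques already developed for Appendices~\ref{sec:proof-theor-refth-RDTcausal} and~\ref{sec:pred-csit-proof}.

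The main obstacle will be the recursive bookkeeping of the power exponents in the regime $t < \hat\block$: they must be chosen jointly so as to satisfy the long-term constraint and realise the claimed outage exponent, while transitioning gracefully at the boundary between the ``limited-lookahead'' early blocks and the ``full-lookahead'' later blocks. A secondary technical point is the correct handling of the leftover $r$ dimensions on the $(\hat\block+1)$-th rescue block, which are responsible for the term $d_S(\rate) - \notx\hat\block$ in the second expression and which must be shown to attain the same per-dimension boost as the full blocks under the lookahead condition.
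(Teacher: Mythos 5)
Your overall architecture is the same as the paper's: reduce the discrete-input mutual information to a count of ``good'' transmit dimensions per block (so that outage is avoided iff at least $d_S(\rate)$ of the $\noblock\notx$ dimensions are rescued), use an isotropic power rule whose SNR exponent saturates the long-term constraint given the observed CSIT exponents, and then split on $t\ge\hat\block$ versus $t<\hat\block$ with exactly the arithmetic identity $\sum_{k=t+1}^{\hat\block}k=\tfrac{(\hat\block-t)(\hat\block+t+1)}{2}$ and the decomposition $d_S(\rate)-\notx(\hat\block-t)=\notx t+r$ that the paper uses. Two points where your plan deviates in a way that would cause trouble if executed literally. First, the claim that the outage exponent accrues as ``$\power^{-\norx^2\de}$ per misclassified block'' is not the mechanism at work (and is internally inconsistent with your own accounting of ``one factor $\norx$ \ldots and a further $\norx\notx\de$''). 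In the paper there is no classification/misclassification event: the power exponent is deterministically $\overline\npower_\block(\mat a)=1+\sum_{\block'\le\min\{\block+t,\noblock\}}\sum_{\tx,\rx}\min\{a_{\block',\tx,\rx},\de\}$, and the diversity is the infimum of $\sum_{\block,\tx,\rx}a_{\block,\tx,\rx}$ over configurations in which fewer than $\noblock\rate/\consize$ dimensions lie below the power level; the optimiser puts $d_S(\rate)$ dimensions exactly at $a^\star_{\block,\tx,\rx}=\overline\npower_\block(\mat a^\star)$ and the rest at zero, which is where both the $1+\norx d_S(\rate)\de$ per-dimension exponent and the block-index-dependent exponents $1+\notx\norx(\block+t)\de$ come from. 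Your final tallies happen to restate the theorem, but the intermediate heuristic would not produce them. Second, the converse in the paper is not a hypothesis-testing bound on the CSIT: it is inherited from the genie-aided reduction to $\notx$ parallel $1\times\norx$ block-fading channels used in Appendix~\ref{sec:proof-theor-refth-RDTcausal}, after which the same infimum computation gives a matching upper bound. With those two corrections your outline coincides with the paper's proof.
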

\begin{proof}
  See Appendix \ref{sec:proof-theor-refth-pred-csit-rdt}. 
\end{proof}
Theorem \ref{the:predictive-csit-rdt} illustrates the impact of mismatched predictive CSIT on the outage diversity of the MIMO block-fading channel.  In contrast to the causal CSIT case, we observe from  \eqref{eq:RDT_predictive_theorem}  that the optimal outage diversity is strictly increasing with the quality of CSIT $\de$.  In effect, $d(\rate, \de) = \infty$ with perfect CSIT ($\de \to \infty$).  Moreover, for SISO systems with $t \geq 1$, or MIMO systems with $t \geq 0$, the outage curve is vertical \cite{BiglieriCaireTaricco2001,NguyenGuillenRasmussen2010it} when $\de = \infty$, leading to positive delay-limited capacity \cite{HanlyTse1998}.  The effect of $\de$ on the outage diversity of a 2-by-2 MIMO block-fading channel with $\noblock =4$ is illustrated in Figure \ref{fig:RDT_MIMO_predict_vsde}.   The figure shows that significant gains in outage diversity is obtained even for $t=0$ and relatively small $\de$, making the outage diversity effectively infinite for practical purposes, especially for small transmission rates.  

Similarly, the outage diversity is improved by increasing $t$, the number of blocks whose CSIT is available prior to transmission.  The rate-diversity tradeoff for a 2-by-2 MIMO block-fading channel with $\noblock=4$ using 16-QAM input constellation and $\de=0.5$ is illustrated in Figure \ref{fig:RDT_MIMO_predict_vsdel}.   Note from \eqref{eq:RDT_predictive_theorem} that the outage diversity cannot be further improved by increasing $t$ beyond $\hat\block= \left\lfloor \frac{d_S(\rate)}{\notx}\right\rfloor$.  This effect is illustrated in Figure \ref{fig:RDT_MIMO_predict_vsdel}, where increasing $t$ only improves the outage diversity at lower transmission rates.  At $t=\noblock-1$, CSIT of all blocks is available prior to transmission, and thus the rate-diversity tradeoff curve coincides with that of systems with full mismatched CSIT.  

Similarly to the Gaussian case in Section \ref{sec:predictive-csit}, with large $t$, increasing the number of predictive blocks leads to marginal improvement in outage diversity.  Therefore, for systems with limited resources for channel estimation, it may be more beneficial to have high quality predictive CSIT for a few future blocks, rather than having poor predictions for many future blocks.   Therefore, trading off between the number predictive blocks and the CSIT quality is required to effectively exploit the available channel-estimation resources.  
%\section{Causal CSIT with Finite Feedback}
%\label{sec:causal-csit-with}
\section{Conclusions}
\label{sec:conclusion}
We have studied the asymptotic outage performance of the MIMO block-fading channel with a general model for incomplete CSIT.  The model covers a wide range of scenarios, including systems where CSIT of all fading blocks is known prior to transmission, systems with causal CSIT where a delay in obtaining CSIT is incurred, and systems with predictive CSIT where the fading gains of future fading blocks is made available at the transmitter.  The results illustrate the effects of the limited as well as imperfect  CSIT on the optimal DMT and RDT under long-term average power constraints.   The analysis reveals that the DMT (RDT) of systems with causal CSIT is limited by the delay in obtaining CSIT.  Meanwhile, the DMT(RDT) of systems with predictive CSIT is limited by the quality of CSIT.  Therefore, the quality and quantity (CSIT delay or number of predictive blocks) tradeoff is dependent on the type of CSIT available, leading to different design criteria in acquiring CSIT. %Throughout the analysis, important tradeoffs between quantity, as well as quality of CSIT and the outage diversity have been identified, providing guidelines for system design and analysis.  
\appendices
\section{DMT of MIMO Block-Fading Channels with Mismatched CSIT}

\label{sec:dmt-mimo-block}

\subsection{Causal CSIT-- Proof of Theorem \ref{the:imper-CSIT-MIMO-DMT}}
\label{sec:causal-csit-proof}
For a MIMO block-fading channel in \eqref{eq:channel_model}, let $\mat \eig_\block =[\eig_{\block, 1}, \ldots, \eig_{\block, \mindim}]^T$, where $0 \leq \eig_{\block, 1} \leq \ldots \leq \eig_{\block, \mindim}$ are the ordered eigenvalues  of $\chmat_\block\chmat_\block^\dag$; and let $\widehat{\mat\eig}_\block = [\widehat\eig_{\block, 1}, \ldots, \widehat\eig_{\block, \mindim}]^T$, where $0 \leq \widehat\eig_{\block, 1} \leq \ldots \leq \widehat\eig_{\block, \mindim}$ are the ordered eigenvalues of $\widehat\chmat_\block\widehat\chmat_\block^\dag$.  For an achievability result, we consider the power allocation rule $\mat\power_\block\left(\incvec{\widehat{\chmat}}{\block-\del}\right) = \power_\block\left(\incvec{\widehat{\mat\eig}}{\block-\del}\right) \ident_{\notx}$, where $\incvec{\widehat{\mat\eig}}{\block-\del} \triangleq \left(\widehat{\mat\eig}_1, \ldots, \widehat{\mat\eig}_{\block-\del}\right)$. %\footnote{this may not be the case for the mismatched CSIT system, so the derived DMT is not necessarily optimal}.  
 Then, the  outage probability asymptotically achieves
\begin{equation}
\label{eq:MIMO_mismatch_outage}
  \Pout(\power, \mgain\log_2\power) \doteq \Pr\left\{\sum_{\block=1}^\noblock\sum_{i=1}^\mindim \log_2\left(1+ \power_\block\left(\incvec{\widehat{\mat\eig}}{\block-\del}\right) \eig_{\block, i}\right) < \noblock\mgain\log_2 \power\right\}.
\end{equation}

Let $\widehat{\npfgain}_{\block, i} = \frac{-\log \widehat\eig_{\block, i}}{\log \power}$, following \cite{ZhengTse2003}, the distribution of $\widehat{\mat\npfgain}_\block= \left[\hat\npfgain_{\block, 1}, \ldots, \hat\npfgain_{\block, \mindim}\right]^T$ in the limit of large $\power$ is 
\begin{equation}
  f_{\widehat{\mat\npfgain}_\block} \left(\widehat{\mat\npfgain}_\block\right) =
  \begin{cases}
    \prod_{i=1}^\mindim \power^{-(2i-1+\maxdim-\mindim)\widehat\npfgain_{\block, i}},&\widehat\npfgain_{\block, 1} \geq \ldots \geq \widehat\npfgain_{\block,\mindim} \geq 0, \\
    0, &{\rm otherwise.} 
  \end{cases}
\end{equation}
Let $\incvec{\widehat{\mat\npfgain}}{\block-\del} = \left[\widehat{\mat\npfgain}_1, \ldots, \widehat{\mat\npfgain}_{\block-\del}\right]$ and $\npower_\block\left(\incvec{\widehat{\mat\npfgain}}{\block-\del}\right) = \frac{\log \power_\block\left(\incvec{\widehat{\mat\eig}}{\block-\del}\right)}{\log\power}$.  The power allocation rule asymptotically satisfies
\begin{equation}
  \int_{\incvec{\widehat{\mat\npfgain}}{\block-\del} \in \real_+^{(\block-\del)\mindim}: \widehat\npfgain_{\block', i} \geq \widehat\npfgain_{\block', i+1}} \power^{\npower_\block\left(\incvec{\widehat{\mat\npfgain}}{\block-\del}\right)} \prod_{\block'=1}^{\block-\del} \prod_{i=1}^\mindim \power^{-(2i-1+ \maxdim-\mindim) \widehat\npfgain_{\block', i}}d\incvec{\widehat{\mat\npfgain}}{\block-\del} \dot\leq \power. 
\end{equation}
Following the Varadhan's lemma \cite[Sec. 4.3]{DemboZeitouni2009}, the power constraint is asymptotically equivalent to 
\begin{equation}
  \npower_\block\left(\incvec{\widehat{\mat\npfgain}}{\block-\del}\right)  \leq 1+ \sum_{\block'=1}^{\block-\del}\sum_{i=1}^\mindim \left(2i-1+ \maxdim-\mindim\right)\widehat\npfgain_{\block', i}. 
\end{equation}
Since the outage probability is a decreasing function of transmit power, the power allocation rule with 
\begin{equation}
  \npower_{\block}\left(\incvec{\widehat{\mat\eig}}{\block-\del}\right) \equiv \npower_\block\left(\incvec{\widehat{\mat\npfgain}}{\block-\del}\right)= 1+ \sum_{\block'=1}^{\block-\del}\sum_{i=1}^\mindim \left(2i-1+ \maxdim-\mindim\right)\widehat\npfgain_{\block', i}
  \end{equation}
is optimal in terms of outage exponent. 

Therefore, letting $\npfgain_{\block, i}\triangleq \frac{-\log \eig_{\block, i}}{\log \power}$ $(\block=1, \ldots, \noblock, i=1, \ldots, \mindim)$, it follows from \eqref{eq:MIMO_mismatch_outage} that the outage probability at large SNR behaves like 
\begin{align}
  \Pout(\power, \mgain\log_2 \power) &\doteq \Pr\left\{\sum_{\block=1}^\noblock\sum_{i=1}^\mindim \log_2\left(1+ \power^{\npower_\block\left(\incvec{\widehat{\mat\npfgain}}{\block-\del}\right) - \npfgain_{\block, i}}\right) < \noblock \mgain\log_2 \power\right\} \\
&\doteq  \Pr\left\{\sum_{\block=1}^\noblock \sum_{i=1}^\mindim \left(\npower_\block\left(\incvec{\widehat{\mat\npfgain}}{\block-\del}\right)- \npfgain_{\block, i}\right)_+ < \noblock\mgain\right\}. 
\end{align}

Now for $\block=1, \ldots, \noblock$, let $\overline{\mat\eig}_\block=\left(\overline\eig_{\block, 1},  \ldots, \overline\eig_{\block, \mindim}\right)^T$, where $\overline\eig_{\block, 1} \leq \ldots \leq \overline\eig_{\block, \mindim}$ are the ordered eigenvalues of $\overline\chmat_\block\overline\chmat_\block^\dag$.  Furthermore, letting $\overline\npfgain_{\block, i} \triangleq \frac{-\log \overline\eig_{\block, i}}{\log \power}$ (for $\block=1, \ldots, \noblock,  i = 1, \ldots, \mindim$), it follows from \eqref{eq:mismatch_model} and \eqref{eq:norm_CSIT_noise} that $\overline\npfgain_{\block, i} = -\npfgain_{\block, i} + \de$.  Then the outage probability can be written as 
\begin{equation}
  \Pout(\power, \mgain\log_2 \power) \doteq \int_{\left(\incvec{\widehat{\mat\npfgain}}{\noblock}, \incvec{\overline{\mat\npfgain}}{\noblock}\right) \in \outset} \prod_{\block=1}^\noblock  f_{\overline{\mat\eig}_\block | \widehat{\mat\eig}_\block}\left(\overline{\mat\eig}_\block| \widehat{\mat\eig}_\block\right) f_{\widehat{\mat\eig}_\block}\left(\widehat{\mat\eig}_\block\right) d\incvec{\widehat{\mat\eig}}{\noblock} d\incvec{\overline{\mat\eig}}{\noblock}, 
\end{equation}
where 
\begin{equation}
\label{eq:outset_def}
  \outset \triangleq \left\{\left(\incvec{\widehat{\mat\npfgain}}{\noblock}, \incvec{\overline{\mat\npfgain}}{\noblock}\right) \in \left(\real^{\noblock\mindim}, \real^{\noblock\mindim}\right): \sum_{\block=1}^\noblock \sum_{i=1}^\mindim\left(\npower_\block\left(\incvec{\widehat{\mat\npfgain}}{\block-\del} \right)- \overline\npfgain_{\block, i} - \de \right)_+ < \noblock\mgain\right\} 
\end{equation}
is the outage set. 

Following the analysis in \cite{KimCaire2009}, the outage probability is bounded by 
\begin{align}
\label{eq:asym_bound_outage}
\Pout(\power, \mgain\log_2 \power) \dot \leq 
\int_{\outset} \prod_{\block=1}^\noblock\prod_{i=1}^\mindim &\exp\left(-\power^{-\overline\npfgain_{\block, i}}- \power^{-\left(\widehat\npfgain_{\block, i} - \de\right)} +\power^{\frac{-\overline\npfgain_{\block, i} - \left(\widehat\npfgain_{\block, i} - \de\right)}{2}}\right) \power^{-(\maxdim-\mindim)\overline\npfgain_{\block, i}} \notag\\
&\prod_{j> i} \left[\left(\power^{-\overline\npfgain_{\block, i}} - \power^{-\overline\npfgain_{\block, j}}\right)^2 \right] \power^{-(\maxdim-\mindim)\widehat\npfgain_{\block, i}}\prod_{j> i} \left[\left(\power^{-\widehat\npfgain_{\block, i}} - \power^{-\widehat\npfgain_{\block, j}}\right)^2\right]\notag\\
&\exp\left(-\power^{-\widehat\npfgain_{\block, i}}\right) \power^{-(\overline\npfgain_{\block,i}  + \widehat\npfgain_{\block, i} )}  d\incvec{\widehat{\mat\npfgain}}{\noblock} d\incvec{\overline{\mat\npfgain}}{\noblock}
\end{align}
As in \cite{KimCaire2009}, defining the $(\mindim+1)^\noblock$ disjoint integral regions
\begin{align}
 \mathcal{A}_{\mat k} \triangleq \big\{\incvec{\overline{\mat  \npfgain}}{\noblock}, \incvec{\widehat{\mat \npfgain}}{\noblock}: &\widehat\npfgain_{\block, 1} \geq \ldots\geq \widehat\npfgain_{\block, \mindim-k_\block} \geq \de > \widehat\npfgain_{\block, \mindim-k_\block+1} \geq \ldots \geq \widehat\npfgain_{\block, \mindim} \geq 0, \notag\\
&\overline\npfgain_{\block, 1} \geq \ldots \geq \overline\npfgain_{\block, \mindim-k_\block} \geq 0, \notag\\
&\overline\npfgain_{\block, \mindim-k_\block+1} = \widehat\npfgain_{\block, \mindim-k_\block+1} -\de, \ldots, \overline\npfgain_{\block, \mindim}= \widehat\npfgain_{\block, \mindim}- \de, \block =1, \ldots, \noblock
\big\}
\end{align}
where $\mat k= [k_1, \ldots, k_\noblock] \in \left\{0, \ldots, \mindim\right\}^\noblock$.  Further define the corresponding exponent $d_{\mat k}$  such that 
\begin{equation}
\int_{\outset \cap \mathcal{A}_{\mat k}} \prod_{\block=1}^\noblock f_{\overline{\mat \npfgain}_\block|\widehat{\mat\npfgain}_\block}\left(\overline{\mat\npfgain}_\block|\widehat{\mat\npfgain}_\block\right)f_{\widehat{\mat\npfgain}_\block}\left(\widehat{\mat\npfgain}_\block\right) d\overline{\mat\npfgain}_\block d\widehat{\mat \npfgain}_\block \doteq \power^{-d_{\mat k}}. 
\end{equation}
Then, the outage diversity at multiplexing gain $\mgain$ satisfies
\begin{equation}
  d(\mgain, \de) \geq \min_{\mat k} \left\{d_{\mat k}\right\}. 
\end{equation}
We now have that \cite{KimCaire2009}
\begin{align}
\int_{\outset \cap \mathcal{A}_{\mat k}} \prod_{\block=1}^\noblock f_{\overline{\mat \npfgain}_\block|\widehat{\mat\npfgain}_\block}\left(\overline{\mat\npfgain}_\block|\widehat{\mat\npfgain}_\block\right)f_{\widehat{\mat\npfgain}_\block}\left(\widehat{\mat\npfgain}_\block\right) d\overline{\mat\npfgain}_\block d\widehat{\mat \npfgain}_\block &\doteq\notag\\
&\hspace{-2 in} \int_{\outset \cap \mathcal{A}_{{\mat k}}} \prod_{\block=1}^\noblock \power^{-\sum_{i=1}^{\mindim-k_\block} (2i-1+\maxdim-\mindim)\overline\npfgain_{\block, i}} \power^{-\sum_{i=1}^\mindim (2i-1 + \maxdim-\mindim)\widehat\npfgain_{\block, i}} d\incvec{\overline{\mat\npfgain}}{\noblock} d\incvec{\widehat{\mat\npfgain}}{\noblock}. 
\end{align}
Together with \eqref{eq:outset_def}, we have that%\footnote{This follows from \cite{KimCaire2009}, I still don't know why}
\begin{align}
  d_{\mat k} &= \inf_{\mathcal{A}_{\mat k}}  \left\{\sum_{\block=1}^\noblock \sum_{i=1}^{\mindim-k_\block} \left(2i-1+\maxdim-\mindim\right)\overline\npfgain_{\block, i} + \sum_{i=1}^\mindim (2i-1+\maxdim-\mindim)\widehat\npfgain_{\block, i}\right\}\\
\label{eq:part-diver}&{\rm s.t.} \sum_{\block=1}^\noblock \sum_{i=1}^{\mindim-k_\block} \left(\npower_\block\left(\incvec{\widehat{\mat\npfgain}}{\block-\del}\right) - \de - \overline\npfgain_{\block, i}\right)_+ + \sum_{i=\mindim-k_\block+1}^\mindim \left(\npower_\block\left(\incvec{\widehat{\mat \npfgain}}{\block-\del}\right) - \widehat\npfgain_{\block, i}\right)_+ < \noblock\mgain,
\end{align}
where we recall that $\npower_\block\left(\incvec{\widehat{\mat \npfgain}}{\block- \del}\right) = 1+ \sum_{\block'=1}^{\block-\del} \sum_{j=1}^\mindim (2j-1+\maxdim-\mindim)\widehat\npfgain_{\block', i}$.  For $\widehat\npfgain_{\block, i}$ with $i \leq \mindim-k_\block$, decreasing $\widehat\npfgain_{\block,i}$ decreases the objective function, while the constraint is unchanged.  Therefore, the optimiser satisfies $\widehat\npfgain_{\block, i} = \de, i=1, \ldots, \mindim-k_\block$.  Thus, $d_{\mat k}$ is obtained from
\begin{equation}
\begin{cases}
{\rm Infimum~} &\noblock\maxdim\mindim\de+ \sum_{\block=1}^\noblock\sum_{i=1}^\mindim(2i-1+\maxdim-\mindim) \overline\npfgain_{\block, i} \\
{\rm Subject ~ to~}& \sum_{\block=1}^\noblock\sum_{i=1}^\mindim \left(\npower_\block\left(\incvec{\widehat{\mat\npfgain}}{\block-\del}\right)- \de-\overline\npfgain_{\block, i}\right)_+ < \noblock\mgain\\
&\overline\npfgain_{\block, 1} \geq \ldots\geq \overline\npfgain_{\block,\mindim-k_\block} \geq 0 \geq \overline\npfgain_{\block, \mindim-k_\block+1} \geq \ldots\geq \overline\npfgain_{\block, \mindim} \geq -\de
\end{cases}
\end{equation}
where $\npower_\block\left(\incvec{\widehat{\mat\npfgain}}{\block-\del}\right)= 1+\maxdim\mindim\de(\block-\del)_+ + \sum_{\block'=1}^{\block-\del}\sum_{i=\mindim-k_\block'+1}^\mindim(2i-1+\maxdim-\mindim)\overline\npfgain_{\block,i}$.  \endproof
\subsection{Causal CSIT in Vector Channels-- Proof of Corollary \ref{cor:DMT_SISO_causal_mismatch}}
\label{sec:causal-csit-vector}
  With $\mindim =1$, letting $a_\block = \overline\npfgain_{\block, 1} + \de$. Noting that the solution of \eqref{eq:opt_dmt_mimo_causal} satisfies $-\de \leq \overline\npfgain_{\block, 1} \leq \npower_\block\left(\incvec{\widehat{\mat\npfgain}}{\block-\del}\right) - \de$, a lower bound of $d(\mgain, \de)$ is obtained from Theorem \ref{the:imper-CSIT-MIMO-DMT} as
 \begin{equation}
\label{eq:SIMO_opt}
    \begin{cases}
      {\rm Infimum}& \maxdim\sum_{\block=1}^\noblock a_\block\\
      {\rm Subject~to~}&\sum_{\block=1}^\noblock\left(\npower_\block(\mat a) - a_\block\right) < \noblock\mgain\\
      & \npower_{\block}(\mat a) \geq a_\block \geq 0, \block=1, \ldots, \noblock,
    \end{cases}
  \end{equation}
where $\npower_\block(\mat a) = 1+ \maxdim\sum_{\block'=1}^{\block-\del} \min\left\{a_\block, \de\right\}$. The constraints in \eqref{eq:SIMO_opt} are equivalent to
\begin{equation}
\label{eq:simp_constraint}
\begin{cases}
 & \sum_{\block=1}^\noblock  a_\block> \sum_{\block=1}^\noblock \npower_\block(\mat a) - \noblock\mgain. \\
 &\npower_\block(\mat a) \geq a_\block \geq 0, \block=1, \ldots, \noblock.
\end{cases}
\end{equation}
 We consider the following cases. 
\begin{itemize}
\item When $\noblock-\del-\noblock\mgain \leq 0$, since $\npower_{\block}(\mat a) \geq 1$, it follows from \eqref{eq:simp_constraint} that 
  \begin{equation}
  \sum_{\block=1}^\noblock \npower_\block(\mat a) - \noblock\mgain \geq  \noblock(1-\mgain), 
  \end{equation}
with equality attained when, for e.g., 
\begin{equation}
  a_\block=
  \begin{cases}
    0, &\block=1, \ldots, \left\lfloor \noblock\mgain\right\rfloor\\
    \frac{\noblock(1-\mgain)}{\noblock-\lfloor\noblock\mgain\rfloor}, &\block=\lfloor \noblock\mgain\rfloor + 1, \ldots, \noblock
  \end{cases}
\end{equation}
satisfying the constraints in \eqref{eq:simp_constraint}.  Therefore, $d(\mgain, \de) \geq \maxdim\noblock(1-\mgain)$ in this case. 
\item When $\noblock-\del-\noblock\mgain > 0$, the right-hand-side of the first constraint in \eqref{eq:simp_constraint} is minimised when, for e.g., 
  \begin{equation}
    a_\block=
    \begin{cases}
      0, &\block=1, \ldots, \lfloor \noblock\mgain\rfloor\\
      1-\noblock\mgain+\lfloor\noblock\mgain\rfloor, &\block = \lfloor \noblock\mgain\rfloor+1\\
      \npower_{\block}(\mat a), &\block=\lfloor \noblock\mgain\rfloor+2, \ldots, \noblock. 
    \end{cases}
  \end{equation}
  satisfying the constraints in~\eqref{eq:SIMO_opt}.  By letting $a^\star_i = a_{i+\lfloor\noblock\mgain\rfloor}, i=1, \ldots, \noblock-\lfloor\noblock\mgain\rfloor$, noting that $a_\block=a_{\block-1} + \maxdim\min\{a_{\block-\del}, \de\}, \block \geq \del$, we arrive at \eqref{eq:recursive_a}.
\end{itemize}
Combining the two cases, we arrive at \eqref{eq:mismatch_DMT_causal_SISO}. \endproof
\subsection{Predictive CSIT-- Proof of Theorem \ref{the:imperfect-predict}}
\label{sec:pred-csit-proof}
As in Appendix \ref{sec:causal-csit-proof},  for channel with imperfect CSIT $\incvec{\widehat\chmat}{\min\{\noblock,\block+t\}}$, for large $\power$, the power allocation  rule asymptotically satisfies
\begin{equation}
\label{eq:pred_mimo_dmt_npower}
  \npower_\block\left(\incvec{\widehat{\mat\npfgain}}{\block+t}\right) = 1+ \sum_{\block'=1}^{\min\{\noblock, \block+t\}}\sum_{i=1}^\mindim (2i-1+\maxdim-\mindim)\widehat\npfgain_{\block, i}.  
\end{equation}
Therefore, following similar arguments to Appendix \ref{sec:causal-csit-proof},  the similar to the arguments in Section \ref{sec:causal-csit-proof}, the outage diversity  is lower bounded by 
\begin{equation}
  d(\mgain, \de) \geq \min_{\mat k \in  \left\{0, \ldots, \mindim\right\}^\noblock} d_{\mat k}.
\end{equation}
In which, $d_{\mat k}$ is defined as
\begin{equation}
d_{\mat k} = \inf_{\left(\incvec{\overline{\mat\npfgain}}{\noblock}, \incvec{\widehat{\mat\npfgain}}{\noblock}\right) \in \mathcal{A}_{\mat k} \cap \outset} \left\{\sum_{\block=1}^\noblock\sum_{i=1}^{\mindim-k_\block} (2i-1+\maxdim-\mindim) \overline\npfgain_{\block, i} + \sum_{i=1}^\mindim(2i-1+\maxdim-\mindim)\widehat\npfgain_{\block, i}\right\}, 
\end{equation}
where 
\begin{align}
  \outset &= \left\{\left(\incvec{\overline{\mat\npfgain}}{\noblock}, \incvec{\widehat{\mat\npfgain}}{\noblock}\right): \sum_{\block=1}^\noblock \sum_{i=1}^{\mindim-k_\block}\left(\npower_{\block}\left(\incvec{\widehat{\mat \npfgain}}{\block+t}\right) - \de-\overline\npfgain_{\block, i}\right)_+ + \sum_{i=\mindim-k_\block+1}^\mindim  \left(\npower_\block\left(\incvec{\widehat{\mat\npfgain}}{\block+t}\right)- \widehat\npfgain_{\block, i}\right)_+ < \noblock\mgain\right\} 
\end{align}
and $\npower_\block\left(\incvec{\widehat{\mat\npfgain}}{\block+t}\right)$ is defined in \eqref{eq:pred_mimo_dmt_npower}.  In this case, noting that decreasing $\widehat\npfgain_{\block, i}$ decreases both the objective function and the constraint in $\outset$.  Therefore, the optimiser satisfies 
\begin{equation}
  \widehat\npfgain_{\block, i} =
  \begin{cases}
    \de, &i \leq \mindim-k_\block\\
    0, &{\rm otherwise.}
  \end{cases}
\end{equation}
and thus $\npower_{\block}\left(\incvec{\widehat{\mat\npfgain}}{\block+t}\right) = 1+ \de\sum_{\block'=1}^{\min\{\noblock,\block+t\}} (\maxdim-k_{\block'})(\mindim-k_{\block'})$.   It follows that $d_{\mat k}$ is obtained from
\begin{equation}
  \begin{cases}
    {\rm Infimum~}& \sum_{\block=1}^\noblock\de(\maxdim-k_\block)(\mindim-k_\block) + \sum_{\block=1}^\noblock\sum_{i=1}^{\mindim-k_\block}(2i-1+\maxdim-\mindim)\overline\npfgain_{\block, i}\\
    {\rm Subject~to~}&\sum_{\block=1}^\noblock\sum_{i=1}^{\mindim-k_\block}\left(\npower_\block\left(\incvec{\widehat{\mat\npfgain}}{\block+t} \right)- \de-\overline\npfgain_{\block, i}\right)_+ + \sum_{\block=1}^\noblock k_\block \npower_{\block}\left(\incvec{\widehat{\mat \npfgain}}{\block+t} \right)< \noblock\mgain\\
    &\npfgain_{\block, i} \geq 0, i=1, \ldots, \mindim-k_\block
  \end{cases}
\end{equation}
as required. \endproof

\section{RDT of MIMO Block-Fading Channels with Mismatched CSIT}
\subsection{Proof of Theorem \ref{the:RDT_causal}}
\label{sec:proof-theor-refth-RDTcausal}
Let $\fgain_{\block,\tx,\rx}$ be the fading gain corresponding to the link between transmit antenna $\tx$ and receive antenna $\rx$ in block $\block$.  Similarly, let $\widehat\fgain_{\block,\tx,\rx}$ be the estimate of $\fgain_{\block,\tx,\rx}$ and $\overline\fgain_{\block,\tx,\rx}$ be the elements of $\overline\chmat_\block$ defined in \eqref{eq:norm_CSIT_noise}.  Further define $\pfgain_{\block,\tx,\rx} \triangleq |\fgain_{\block,\tx,\rx}|^2, \widehat\pfgain_{\block,\tx,\rx} \triangleq |\widehat\fgain_{\block,\tx,\rx}|$ and $\overline\pfgain_{\block,\tx,\rx} \triangleq |\overline\fgain_{\block,\tx,\rx}|^2$ be the corresponding power fading gains.  For an achievability proof, let us consider the power allocation rule 
\begin{equation}
  \mat\power_\block\left(\incvec{\chmat}{\block-\del}\right) = \power\left(\incvec{\widehat\mpfgain}{\block-\del}\right) \ident_\notx, \; \block = 1, \ldots, \noblock, 
\end{equation}
where $\widehat\mpfgain \in \real_+^{\noblock\times\notx\times\norx}$ is the power fading gain matrix in block $\block$ with entries $\widehat\pfgain_{\block, \tx,\rx}$.  Define $\npfgain_{\block,\tx,\rx}\triangleq \frac{-\log \pfgain_{\block,\tx,\rx}}{\log \power}, \widehat\npfgain_{\block,\tx,\rx} = \frac{-\log \widehat\pfgain_{\block,\tx,\rx}}{\log \power}$ and $\overline\npfgain_{\block,\tx,\rx}= \frac{-\log\overline\pfgain_{\block,\tx,\rx}}{\log \power}$.  It then follows from \eqref{eq:mismatch_model} and \eqref{eq:norm_CSIT_noise} that $\overline\npfgain_{\block,\tx,\rx}= \npfgain_{\block,\tx,\rx} - \de$.  Let $\mnpfgain_\block, \widehat\mnpfgain_\block, \overline\mnpfgain_\block \in \real^{\notx\times\norx}$ be the matrices with entries $\npfgain_{\block,\tx,\rx}, \widehat\npfgain_{\block,\tx,\rx}$ and $\overline\npfgain_{\block,\tx,\rx}$ correspondingly.  Further define $\npower_\block\left(\incvec{\widehat\mpfgain}{\block-\del}\right) = \npower_\block\left(\incvec{\widehat\mnpfgain}{\block-\del}\right) \triangleq \frac{-\log \power\left(\incvec{\widehat\mpfgain}{\block-\del}\right)}{\log \power}$.  Since that the distribution of $\widehat\npfgain_{\block, \tx, \rx}$ in the limit of large $\power$ is
\begin{equation}
  f_{\widehat\npfgain_{\block, \tx, \rx}}(\widehat\npfgain_{\block, \tx, \rx})   \doteq
  \begin{cases}
    \power^{-\widehat\npfgain_{\block, \tx,\rx} }, &\widehat\npfgain_{\block,\tx,\rx} \geq 0\\
    0, &{\rm otherwise, }
  \end{cases}
\end{equation}
it follows from the arguments in Appendix \ref{sec:causal-csit-proof} that the power allocation rule with 
\begin{equation}
   \npower_\block\left(\incvec{\widehat\mnpfgain}{\block-\del}\right) = 1+ \sum_{\block'=1}^{\block-\del}\sum_{\tx=1}^\notx\sum_{\rx=1}^\norx \widehat\npfgain_{\block',\tx,\rx}
\end{equation}
is optimal asymptotically.  

Following \eqref{eq:pout} and \eqref{eq:MIMO_info},  in the limit of large $\power$, the outage probability is given by 
\begin{align}
  \Pout(\power, \rate) &\doteq \Pr\left\{\frac{1}{\noblock}\sum_{\block=1}^\noblock\infoX\left(\chmat_\block\power^\half_\block\left(\incvec{\widehat\mnpfgain}{\block-\del}\right)\right) < \rate\right\}, \\
&\doteq \Pr\left\{\sum_{\block=1}^\noblock T\left(\incvec{\widehat\mnpfgain}{\block-\del}, \overline\mnpfgain_\block\right)> 2^{\consize\notx}\noblock(\consize\notx-\rate) \right\}  \label{eq:outprob_mimo_causal}
\end{align}
where 
\begin{align}
  T\left(\incvec{\widehat\mnpfgain}{\block-\del},\overline\mnpfgain_\block\right) &\triangleq \sum_{\txsig \in \conste^\notx} \expectation{\boldsymbol z}{\log_2\left(\sum_{\txsig' \in \conste^\notx} \exp\left(-\left\|\chmat_\block\power_\block^{\half}\left(\incvec{\widehat\mnpfgain}{\block-\del}\right)(\txsig-\txsig') + \boldsymbol z\right\|^2 + \|\boldsymbol z\|^2\right)\right)}\notag\\
%& \hspace{-0.8 in} \sum_{\txsig \in \conste^{\notx}}\expectation{\mat z}{\log_2\left(\sum_{\txsig' \in \conste^\notx} \exp\left(-\left\| \power^{\frac{\npower_\del\left(\incvec{\widehat\mnpfgain}{\block-\del}\right)}{2}}\chmat_\block(\txsig-\txsig') + \mat z\right\|^2 + \|\mat z\|^2\right)\right)}
&\hspace{-1 in}=\sum_{\txsig \in \conste^\notx} \expectation{\mat z}{\log_2\left(\sum_{\txsig' \in \conste^\notx} \exp\left(\sum_{\rx=1}^\norx -\left|\sum_{\tx=1}^\notx \power^{\frac{\npower_\block\left(\incvec{\widehat\mnpfgain}{\block-2}\right) - \overline\npfgain_{\block, \tx,\rx} - \de}{2}} e^{\im\ang_{\block, \tx,\rx}}(x_\tx- x_{\tx}')\right|^2 + |z_\rx|^2\right)\right)}.
\end{align}
Here $z_\rx$ is the $\rx$th entry of $\mat z$ and $\im = \sqrt{-1}$ is the imaginary number unit. 

For any $\epsilon > 0$, let $\goodset_\block  \triangleq \bigcup_{\rx=1}^\norx \goodset_{\block,\rx}$ and $\kappa_\block = \left|\goodset_\block\right|$, where 
\begin{equation}
  \goodset_{\block,\rx} = \left\{\tx \in \left\{1, \ldots, \notx\right\}: \overline\npfgain_{\block,\tx,\rx} + \de < \npower_\block\left(\incvec{\widehat\mnpfgain}{\block-\del}\right)- \epsilon\right\}
\end{equation}
For any $\rx  \in \left\{1, \ldots, \norx\right\}$, let $\overline\npfgain_{\block, \rx} = \max_{\tx \in \goodset_{\block, \rx}} \left\{\overline\npfgain_{\block, \tx, \rx}\right\}$.  If there exists $\tx \in \goodset_{\block, \rx}$ such that $x_\tx \neq x'_\tx$ then
\begin{align}
  \lim_{\power \to \infty}   \left|\sum_{\tx=1}^\notx \power^{\frac{\npower_\block\left(\incvec{\widehat\mnpfgain}{\block-2}\right) - \overline\npfgain_{\block, \tx,\rx} - \de}{2}} e^{\im\ang_{\block, \tx,\rx}}(x_\tx- x_{\tx}')\right|& \notag\\
&\hspace{-2 in}= \lim_{\power \to \infty} \left| \power^{\frac{\npower_\block\left(\incvec{\hat\mnpfgain}{\block-\del}\right) - \overline\npfgain_{\block, \rx} - \de}{2}}\sum_{\tx=1}^\notx\power^{\frac{\overline\npfgain_{\block, \rx}-\overline\npfgain_{\block, \tx, \rx} }{2}} e^{\im\ang_{\block,\tx,\rx}}(x_\tx- x'_\tx)\right|\notag\\
&\hspace{-2 in} \geq \lim_{\power\to\infty}\left| \power^{\frac{\npower_\block\left(\incvec{\widehat\mnpfgain}{\block-\del}\right)- \overline\npfgain_{\block,\rx} - \de}{2}} \sum_{\tx \in \goodset_{\block,\rx}} e^{\im\ang_{\block,\tx,\rx}}(x_\tx-x'_\tx)\right| = \infty
\end{align}
with probability 1 since $\ang_{\block,\tx,\rx}$ are uniformly distributed in $[-\pi, \pi]$.  Thus, for asymptotically large $\power$, 
\begin{equation}
  T\left(\incvec{\widehat\mnpfgain}{\block-\del}, \overline\mnpfgain_{\block}\right) \dot\leq \sum_{\txsig \in \conste^\notx} \log_2\left(\sum_{\txsig' \in\conste^\notx} \openone{x_\tx= x'_\tx, \forall \tx \in \goodset_\block}\right)= 2^{\consize\notx}\consize\left(\notx-\kappa_\block\right). 
\end{equation}
Therefore, it follows from \eqref{eq:outprob_mimo_causal} that  
\begin{align}
  \Pout(\power, \rate) &\dot\leq \Pr\left( \sum_{\block= 1}^\noblock \kappa_\block < \frac{\noblock\rate}{\consize}\right)\\
&\doteq \int_{\left(\incvec{\widehat\mnpfgain}{\noblock}, \incvec{\overline\mnpfgain}{\noblock}\right) \in \outset} \prod_{\block, \tx, \rx} f_{\overline\npfgain_{\block,\tx, \rx}| \widehat\npfgain_{\block, \tx,\rx}} \left(\overline\npfgain_{\block, \tx, \rx}| \widehat\npfgain_{\block, \tx,\rx}\right) d\overline\npfgain_{\block,\tx,\rx}  d \widehat\npfgain_{\block, \tx,\rx}
\end{align}
where 
\begin{align}
  \outset&= \left\{\left(\incvec{\overline\mnpfgain}{\noblock}, \incvec{\widehat\mnpfgain}{\noblock}\right) \in \real^{2\noblock\norx\notx}: \sum_{\block=1}^\noblock  \kappa_\block < \frac{\noblock\rate}{\consize}\right\}\\
&=\left\{\left(\incvec{\overline\mnpfgain}{\noblock}, \incvec{\widehat\mnpfgain}{\noblock}\right): \sum_{\block=1}^\noblock\sum_{\tx=1}^\notx \openone{\overline\npfgain_{\block,\tx} + \de < \npower_\block\left(\incvec{\widehat\mnpfgain}{\block-\del}\right)  - \epsilon} < \frac{\noblock\rate}{\consize}\right\}.
\end{align}
Here $\overline\npfgain_{\block,\tx}= \min\left\{\overline\npfgain_{\block,\tx,\rx}, \rx=1, \ldots, \norx\right\}$. 
Following the arguments in \cite{KimNguyenGuillen2009}, the outage diversity is bounded by 
\begin{equation}
\label{eq:1stinf_RDT_MIMO}  d(\rate, \de) \geq \inf_{\left(\incvec{\overline\mnpfgain}{\noblock}, \incvec{\widehat\mnpfgain}{\noblock}\right) \in \overline\outset}\left\{\sum_{(\block, \tx, \rx): -\de \leq \overline\npfgain_{\block, \tx, \rx}= \widehat\npfgain_{\block, \tx, \rx} - \de < 0} \widehat\npfgain_{\block, \tx, \rx} + \sum_{(\block,\tx,\rx): \overline\npfgain_{\block, \tx, \rx} \geq 0, \widehat\npfgain_{\block,\tx,\rx} \geq \de} (\overline\npfgain_{\block,\tx,\rx}  + \widehat\npfgain_{\block, \tx, \rx})\right\}, 
\end{equation}
where 
\begin{equation}
\label{eq:all_constraint_RDT}
%  \overline\outset \triangleq \outset \bigcap \left\{\bigcap_{\block,\tx,\rx} \left\{-\de \leq \overline\npfgain_{\block,\tx,\rx} = \widehat\npfgain_{\block,\tx,\rx} - \de < 0\right\} \cup \left\{\overline\npfgain_{\block,\tx,\rx} \geq 0, \widehat\npfgain_{\block,\tx,\rx} \geq \de\right\}\right\}.  
\overline\outset \triangleq \left\{\left(\incvec{\overline\mnpfgain}{\noblock}, \incvec{\widehat\mnpfgain}{\noblock}\right) \in \outset: \left\{-\de \leq \overline\npfgain_{\block,\tx,\rx}= \widehat\npfgain_{\block,\tx,\rx} - \de < 0\right\} {\rm ~or~} \left\{\overline\npfgain_{\block,\tx,\rx} \geq 0, \widehat\npfgain_{\block,\tx,\rx} \geq \de\right\}
\right\}
\end{equation}

Noting from \eqref{eq:all_constraint_RDT} that for $(\overline\npfgain_{\block,\tx,\rx},\widehat\npfgain_{\block, \tx, \rx}) \in \overline\outset$, for $(\block, \tx,\rx)$ such that $\widehat\npfgain_{\block,\tx,\rx} \geq \de$, decreasing $\widehat\npfgain_{\block,\tx,\rx}$ decreases the objective function in \eqref{eq:1stinf_RDT_MIMO}, while the constraint in \eqref{eq:all_constraint_RDT} is relaxed.  Therefore, together with the constraint in \eqref{eq:all_constraint_RDT}, the solution of \eqref{eq:1stinf_RDT_MIMO} satisfies
\begin{equation}
  \widehat\npfgain_{\block,\tx, \rx} = \min\left\{\overline\npfgain_{\block,\tx,\rx} + \de, \de\right\}
\end{equation}
Therefore, letting $a_{\block,\tx, \rx} = \overline\npfgain_{\block,\tx,\rx} + \de$ and $a_{\block, \tx} = \min\left\{a_{\block, \tx, \rx}, \rx =1, \ldots, \norx\right\}$, it follows that
% \begin{align}
% \npower\left(\incvec{\underline{\widehat \mnpfgain}}{\noblock}\right) &= \overline\npower_\block(\mat a) \triangleq 1 + \norx\sum_{\block'=1}^{\block-\del} \sum_{\tx=1}^\notx \min\left\{a_{\block', \tx}, \de\right\}\\
% \sum_{\block=1}^\noblock \kappa_\block &= \sum_{\block=1}^\noblock\sum_{\tx=1}^\notx \openone{a_{\block,\tx}  < \overline\npower_\block(\mat a ) - \epsilon}, 
% \end{align}
% we have that 
\begin{equation}
\label{eq:inf_RDT_MIMO}
  d(\rate, \de) \geq \inf_{\mat a \in \widehat \outset} \left\{\sum_{\block =1}^\noblock \sum_{\tx=1}^\notx \sum_{\rx=1}^\norx a_{\block, \tx, \rx} \right\}, 
\end{equation}
where
\begin{equation}
  \widehat{\outset} \triangleq \left\{\mat a \in \real_+^{\noblock\times\notx\times\norx}: \sum_{\block=1}^\noblock\sum_{\tx=1}^\notx \openone{a_{\block, \tx} < \overline\npower_\block(\mat a) - \epsilon}  < \frac{\noblock\rate}{\consize}\right\}
\end{equation}
and $\overline\npower_{\block}(\mat a) = 1+ \sum_{\block'=1}^{\block-\del}\sum_{\tx=1}^\notx\sum_{\rx=1}^\norx \min\{a_{\block, \tx, \rx}, \de\}$.
The infimum in \eqref{eq:inf_RDT_MIMO} is achievable when, for $\rx=1, \ldots, \norx$, 
\begin{equation}
  a_{\block, \tx, \rx} =
  \begin{cases}
    0, & \noblock(\block-1)+ \tx < \frac{\noblock\rate}{\consize}\\
    \overline\npower_\block(\mat a) - \epsilon, &{\rm otherwise.}
  \end{cases}
\end{equation}
Therefore, letting $\hat \block \triangleq \left\lfloor \frac{d_S(\rate)}{\notx}\right\rfloor \triangleq \left\lfloor\frac{1+ \left\lfloor \noblock\left(\notx-\frac{\rate}{\consize}\right)\right\rfloor}{\notx}\right\rfloor$, the outage diversity is lower bounded by 
\begin{equation}
\label{eq:bound_diver}
  d(\rate, \de) \geq \norx \notx\sum_{\block=1}^{\hat\block} a'_\block + \norx\left(d_S(\rate)- \hat\block\notx\right) a'_{1+\hat\block}, 
\end{equation}
where for $\rx=1, \ldots, \norx$
\begin{equation}
  a'_\block = 
  \begin{cases}
    1-\epsilon, &\block = 1, \ldots, \del\\
    a'_{\block-1} + \notx\norx \min\{\de, a'_{\block-u}\} -\epsilon, &\block = u+1, \ldots, \hat\block+1. 
  \end{cases}
\end{equation}
By letting $\epsilon \downarrow 0$, the outage diversity is lower bounded by  \eqref{eq:RDT_MIMO_causal_theorem}. 

  On the other hand, using the genie-aided arguments as in \cite{NguyenThesis2009,KimNguyenGuillen2009}, the outage diversity is upper bounded by that of a channel consisting of $\notx$ parallel channels, each is a block-fading channel with $\norx$ receive antenna.  Using similar approach as in the previous part of the proof,  the outage diversity of the genie-aided channel is  also given by ~\eqref{eq:RDT_MIMO_causal_theorem}.  This concludes the proof of the Theorem. \endproof
\subsection{Proof of Theorem \ref{the:predictive-csit-rdt}}
\label{sec:proof-theor-refth-pred-csit-rdt}
Following the arguments in Appendix \ref{sec:proof-theor-refth-RDTcausal}, the optimal outage diversity of a MIMO block-fading channel with predictive CSIT $\incvec{\widehat\chmat}{\block+t}$  and mismatched CSIT exponent $\de$ is given by\footnote{As in Appendix \ref{sec:proof-theor-refth-RDTcausal}, the final result is obtained by letting $\epsilon \to 0$, which does not affect the analysis, hence $\epsilon$ is removed for simplicity.}
\begin{equation}
\label{eq:inf_rdt_mimo_predict}
  d(\rate, \de) = \inf_{\mat a \in \widehat\outset}\left\{\sum_{\block=1}^\noblock \sum_{\tx=1}^\notx \sum_{\rx=1}^\norx a_{\block, \tx, \rx}\right\}, 
\end{equation}
where 
\begin{align}
  \widehat\outset &\triangleq \left\{\mat a \in \real_+^{\noblock\notx}: \sum_{\block=1}^\noblock\sum_{\tx=1}^\notx \openone{a_{\block,\tx} < \overline\npower_\block(\mat a) } < \frac{\noblock\rate}{\consize}\right\}\\
\overline\npower_\block(\mat a) &\triangleq 1+\sum_{\block'=1}^{\min\left\{\block+t, \noblock\right\}}\sum_{\tx=1}^\notx \sum_{\rx=1}^\norx\min\left\{a_{\block', \tx, \rx}, \de\right\}. 
\end{align}
For any $\mat a \in \widehat \outset$, there are $d_S(\rate)$ coefficients $a_{\block, \tx}$'s satisfying $a_{\block,\tx} \geq \overline\npower_\block(\mat a)$.  The infimum in \eqref{eq:inf_rdt_mimo_predict} is therefore attained when, for $\rx=1, \ldots, \norx$, 
\begin{equation}
  a^\star_{\block, \tx, \rx} =
  \begin{cases}
    \overline\npower_{\block}(\mat a^\star), &(\block-1)\notx+ \tx \leq d_S(\rate)\\    
    0, &{\rm otherwise}. 
  \end{cases}
\end{equation}
Then, for $\block, \tx$ such that $(\block-1) \notx + \tx \leq d_S(\rate)$, 
\begin{equation}
  a^\star_{\block,\tx} \geq  1+ \notx\norx\min\{a^\star_{\block, \tx}, \de\} \geq \de.
\end{equation}
Thus,  when $(\block-1) \notx+\tx  \leq d_S(\rate)$,
\begin{align}
  a^\star_{\block, \tx, \rx}  &= 1+  \sum_{\block'=1}^{\block+t}\sum_{\tx=1}^\notx \sum_{\rx=1}^\norx\min\left\{a^\star_{\block, \tx, \rx}, \de\right\}\\
  &=
  \begin{cases}
    1+ \notx\norx(\block+t)\de, &\block+t \leq \left\lfloor \frac{d_S(\rate)}{\notx}\right\rfloor\\
    1+ \norx d_S(\rate)\de, &{\rm otherwise}. 
  \end{cases}
\end{align}
Letting $\hat\block = \left\lfloor \frac{d_S(\rate)}{\notx}\right\rfloor$.  If $\hat\block \leq t$,  the optimal outage diversity is 
\begin{equation}
  d(\rate, \de)  = \sum_{\block=1}^\noblock\sum_{\tx=1}^\notx\sum_{\rx=1}^\norx a^\star_{\block, \tx, \rx} = \norx d_S(\rate)\left(1+ \norx d_S(\rate)\de\right). 
\end{equation}
Meanwhile, if $\hat\block > t$, the optimal outage diversity is 
\begin{align}
  d(\rate, \de) &=  \sum_{\block=1}^\noblock\sum_{\tx=1}^\notx\sum_{\rx=1}^\norx a^\star_{\block,\tx} \notag\\
&= \norx\left(\sum_{\block=1}^{\hat\block-t} \sum_{\tx=1}^\notx \left(1+ \notx\norx(\block+t) \de\right)+ \left(d_S(\rate) - \notx(\hat\block-t)\right)(1+ \norx d_S(\rate) \de)\right) \notag\\
&=\norx \left(d_S(\rate) + \norx\de \left(\frac{(\hat\block-t)(\hat\block+t+1)}{2}\notx^2 + d_S(\rate) (d_S(\rate)- \notx(\hat\block-t))\right)\right)\notag 
\end{align}
as required.  \endproof

\bibliographystyle{IEEEtran}
\bibliography{bib_database}

% Generated by IEEEtran.bst, version: 1.13 (2008/09/30)
\begin{thebibliography}{10}
\providecommand{\url}[1]{#1}
\csname url@samestyle\endcsname
\providecommand{\newblock}{\relax}
\providecommand{\bibinfo}[2]{#2}
\providecommand{\BIBentrySTDinterwordspacing}{\spaceskip=0pt\relax}
\providecommand{\BIBentryALTinterwordstretchfactor}{4}
\providecommand{\BIBentryALTinterwordspacing}{\spaceskip=\fontdimen2\font plus
\BIBentryALTinterwordstretchfactor\fontdimen3\font minus
  \fontdimen4\font\relax}
\providecommand{\BIBforeignlanguage}[2]{{%
\expandafter\ifx\csname l@#1\endcsname\relax
\typeout{** WARNING: IEEEtran.bst: No hyphenation pattern has been}%
\typeout{** loaded for the language `#1'. Using the pattern for}%
\typeout{** the default language instead.}%
\else
\language=\csname l@#1\endcsname
\fi
#2}}
\providecommand{\BIBdecl}{\relax}
\BIBdecl

\bibitem{Proakis1995}
J.~G. Proakis, \emph{Digital Communications}, 5th~ed.\hskip 1em plus 0.5em
  minus 0.4em\relax McGraw Hill, 2008.

\bibitem{BiglieriProakisShamai1998}
E.~Biglieri, J.~Proakis, and S.~Shamai, ``Fading channels:
  Information-theoretic and communications aspects,'' \emph{IEEE Trans. Inf.
  Theory}, vol.~44, no.~6, pp. 2619--2692, Oct. 1998.

\bibitem{GoldSmithVaraiya1997}
A.~J. Goldsmith and P.~P. Varaiya, ``Capacity of fading channels with channel
  side information,'' \emph{IEEE Trans. Inf. Theory}, vol.~43, no.~6, pp.
  1986--1992, Nov. 1997.

\bibitem{OzarowShamaiWyner1994}
L.~H. Ozarow, S.~Shamai, and A.~D. Wyner, ``Information theoretic
  considerations for cellular mobile radio,'' \emph{IEEE Trans. Veh. Tech.},
  vol.~43, no.~2, pp. 359--378, May 1994.

\bibitem{MalkamakiLeib1999}
E.~Malkam\"{a}ki and H.~Leib, ``Coded diversity on block-fading channels,''
  \emph{IEEE Trans. Inf. Theory}, vol.~45, no.~2, pp. 771--781, Mar. 1999.

\bibitem{NguyenThesis2009}
\BIBentryALTinterwordspacing
K.~D. Nguyen, ``Adaptive transmission for block-fading channels,'' Ph.D.
  dissertation, Inst. Telecommun. Research, Univ. South Austrailia, Dec. 2009.
  [Online]. Available:
  \url{http://www.itr.unisa.edu.au/research/publications/thesis/kdn.pdf}
\BIBentrySTDinterwordspacing

\bibitem{KnoppCaire2002}
R.~Knopp and G.~Caire, ``Power control and beamforming for systems with
  multiple transmit and receive antennas,'' \emph{IEEE Trans. Wireless
  Commun.}, vol.~1, pp. 638--648, Oct. 2002.

\bibitem{LoveHeathLauGesbertRaoAndrews2008}
D.~J. Love, R.~W. Heath, V.~K.~N. Lau, D.~Gesbert, B.~D. Rao, and M.~Andrews,
  ``An overview of limited feedback in wireless communication systems,''
  \emph{IEEE Trans. Sel. Areas Commun.}, vol.~26, no.~8, pp. 1341--1365, 2008.

\bibitem{CaireTariccoBiglieri1999}
G.~Caire, G.~Taricco, and E.~Biglieri, ``Optimal power control over fading
  channels,'' \emph{IEEE Trans. Inf. Theory}, vol.~45, no.~5, pp. 1468--1489,
  Jul. 1999.

\bibitem{BiglieriCaireTaricco2001}
E.~Biglieri, G.~Caire, and G.~Taricco, ``Limiting performance of block-fading
  channels with multiple antenna,'' \emph{IEEE Trans. Inf. Theory}, vol.~47,
  no.~4, pp. 1273--1289, May 2001.

\bibitem{NguyenGuillenRasmussen2010it}
K.~D. Nguyen, A.~{Guill\'{e}n i F\`{a}bregas}, and L.~K. Rasmussen, ``Outage
  exponents of block-fading channels with power allocation,'' \emph{IEEE Trans.
  Inf. Theory}, vol.~56, no.~5, pp. 2373--2381, May 2010.

\bibitem{KimSkoglund2007TC}
T.~T. Kim and M.~Skoglund, ``On the expected rate of slowly fading channels
  with quantized side information,'' \emph{IEEE Trans. Commun.}, vol.~55,
  no.~4, pp. 820--829, Apr. 2007.

\bibitem{KimNguyenGuillen2009}
T.~T. Kim, K.~D. Nguyen, and A.~{Guill\'{e}n i F\`{a}bregas}, ``Coded
  modulation with mismatched {CSIT} over {MIMO} block-fading channels,''
  \emph{accepted for IEEE Trans. Inf. Theory}, 2010.

\bibitem{NegiCioffi2002}
R.~Negi and J.~M. Cioffi, ``Delay-constrained capacity with causal feedback,''
  \emph{IEEE Trans. Inf. Theory}, vol.~48, no.~9, pp. 2478--2494, Sep. 2002.

\bibitem{LetzepisGuillen2008nov}
N.~Letzepis and A.~{Guill\'{e}n i F\`{a}bregas}, ``Outage probability of the
  gaussian {MIMO} free-space optical channel with {PPM},'' \emph{IEEE Trans.
  Commun.}, vol.~57, no.~12, pp. 3682--3690, Dec. 2009.

\bibitem{ChenWong2009}
J.~Chen and K.-K. Wong, ``Communication with causal {CSIT} and controlled
  information outage,'' \emph{IEEE Trans. Wireless Commun.}, vol.~8, no.~5, May
  2009.

\bibitem{NguyenRasmussenGuillenLetzepis2009it}
K.~D. Nguyen, L.~K. Rasmussen, A.~{Guill\'{e}n i F\`{a}bregas}, and
  N.~Letzepis, ``{MIMO ARQ} with multi-bit feedback: Outage analysis,''
  \emph{(submitted to) IEEE Trans. Inf. Theory; available at arXiv:1006.1162},
  May 2009.

\bibitem{HanlyTse1998}
S.~V. Hanly and D.~N.~C. Tse, ``Multiaccess fading channels--{Part II}:
  Delay-limited capacities,'' \emph{IEEE Trans. Inf. Theory}, vol.~44, no.~7,
  pp. 2816--2831, Nov. 1998.

\bibitem{ZhaoMoMaWang2007}
L.~Zhao, W.~Mo, Y.~Ma, and Z.~Wang, ``Diversity and multiplexing tradeoff in
  general fading channels,'' \emph{IEEE Trans. Inf. Theory}, vol.~53, no.~4,
  pp. 1549--1557, Apr. 2007.

\bibitem{VisotskyMadhow2001}
E.~Visotsky and U.~Madhow, ``Space-time transmit precoding with imperfect
  feedback,'' \emph{IEEE Trans. Inf. Theory}, vol.~47, no.~9, pp. 2632--2639,
  Sep. 2001.

\bibitem{JongrenSkoglundOttersten2002}
G.~J\"{o}ngren, M.~Skoglund, and B.~Ottersten, ``Combining beamforming and
  orthogonal space-time block coding,'' \emph{IEEE Trans. Inf. Theory},
  vol.~48, no.~3, pp. 611--627, Mar. 2002.

\bibitem{LimLau2008}
A.~Lim and V.~K.~N. Lau, ``On the fundamental tradeoff of spatial diversity and
  spatial multiplexing of {MISO/SIMO} links with imperfect {CSIT},'' \emph{IEEE
  Trans. Wireless Commun.}, vol.~7, no.~1, pp. 110--117, Jul. 2008.

\bibitem{VerduHan1994}
S.~Verd\'{u} and T.~S. Han, ``A general formula for channel capacity,''
  \emph{IEEE Trans. Inf. Theory}, vol.~40, no.~4, pp. 1147--1157, Jul. 1994.

\bibitem{ZhengTse2003}
L.~Zheng and D.~N. Tse, ``Diversity and multiplexing: A fundamental tradeoff in
  multiple-antenna channels,'' \emph{IEEE Trans. Inf. Theory}, vol.~49, no.~5,
  pp. 1073--1096, May. 2003.

\bibitem{LuKumar2005}
H.~F. Lu and P.~V. Kumar, ``A unified construction of space-time codes with
  optimal rate-diversity tradeoff,'' \emph{IEEE Trans. Inf. Theory}, vol.~51,
  no.~5, pp. 1709--1730, May 2005.

\bibitem{KimCaire2009}
T.~T. Kim and G.~Caire, ``Diversity gains of power control with noisy {CSIT} in
  {MIMO} channels,'' \emph{{IEEE} Trans. Inf. Theory}, vol.~55, no.~4, pp.
  1618--1626, Apr. 2009.

\bibitem{DemboZeitouni2009}
A.~Dembo and O.~Zeitouni, \emph{Large Deviations Techniques and Applications},
  2nd~ed.\hskip 1em plus 0.5em minus 0.4em\relax Springer, 2009.

\end{thebibliography}
\newpage
% \begin{figure}[htbp]
%   \centering
%   \includegraphics[width=1\columnwidth]{DMT_2by2}
%   \caption{The optimal DMT of a 2-by-2 MIMO block-fading channel with $\noblock =4$. }
%   \label{fig:2by2_DMT}
% \end{figure}
%
\newpage
\begin{figure}[htbp]
  \centering
  \includegraphics[width=1\columnwidth]{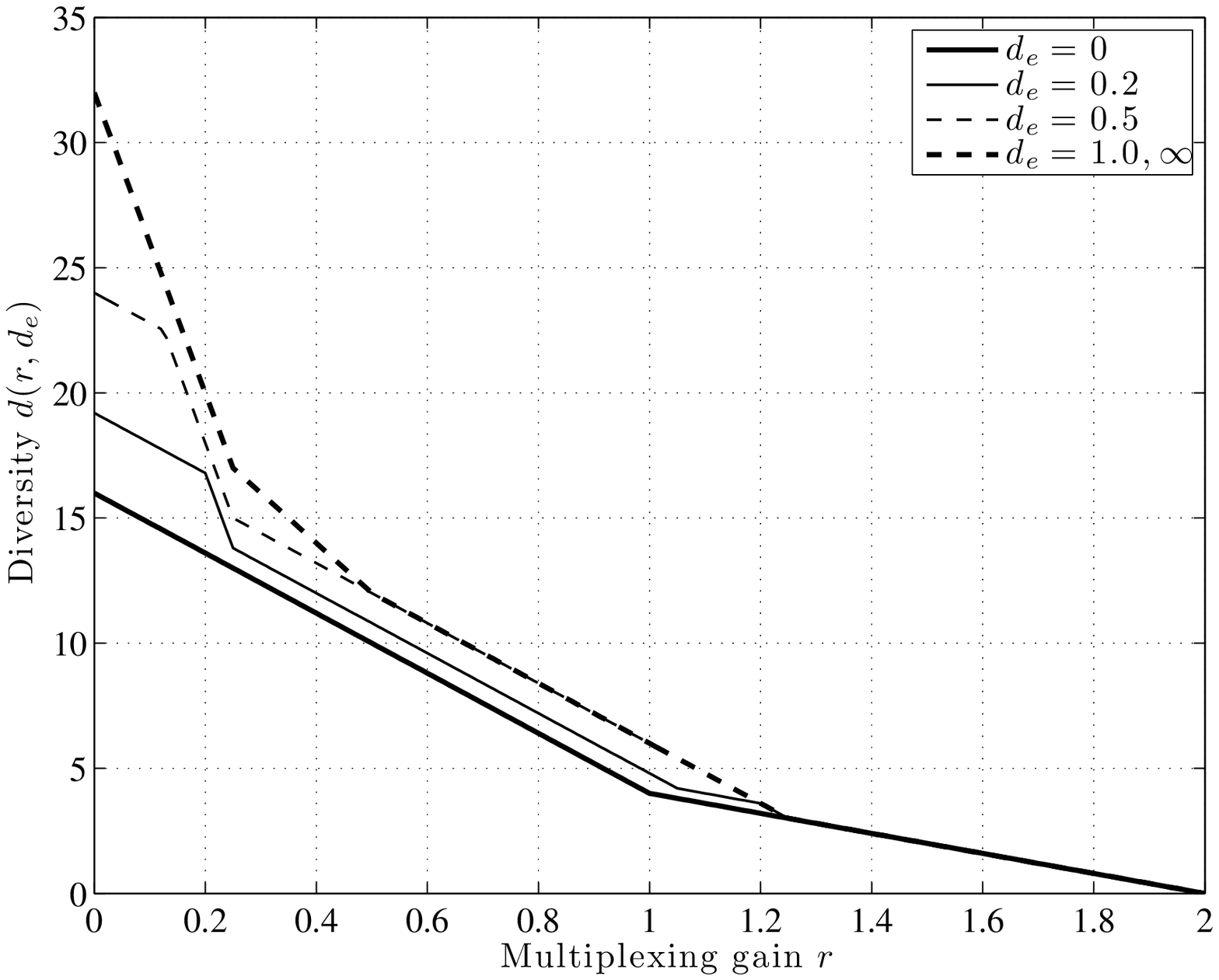}
  \caption{The achievable DMT of a 2-by-2 MIMO block-fading channel with $\noblock=4$ causal CSIT $\del=3$.}
  \label{fig:DMT_MIMO_u3}
\end{figure}
\newpage
\begin{figure}[htbp]
  \centering
  \includegraphics[width=1\columnwidth]{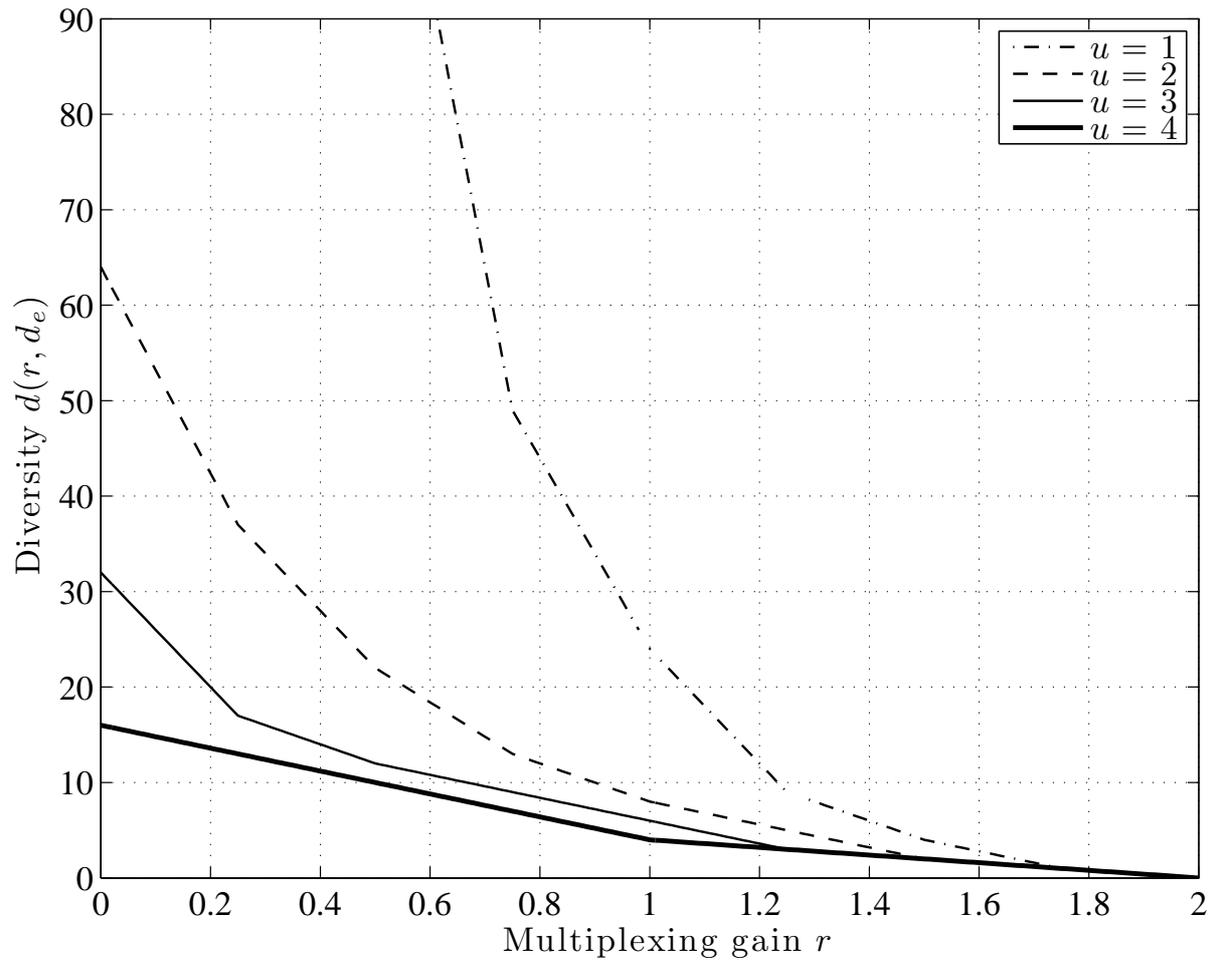}
  \caption{The achievable DMT of a 2-by-2 MIMO block-fading channel with $\noblock=4$ and perfect causal CSIT.}
  \label{fig:DMT_MIMO_del}
\end{figure}

\begin{figure}[htbp]
  \centering
  \includegraphics[width=1\columnwidth]{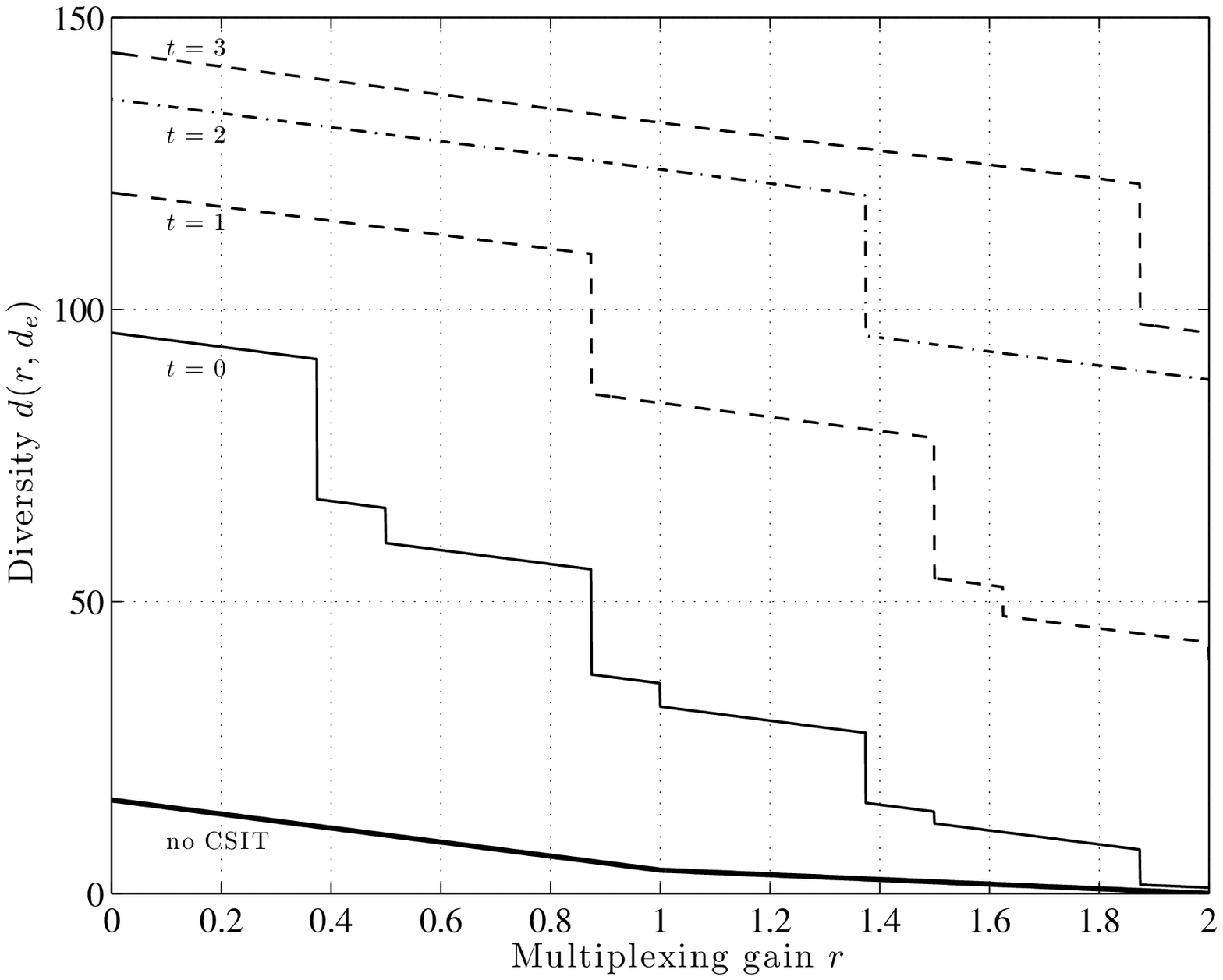}
  \caption{The achievable DMT of a 2-by-2 MIMO block-fading channel with $\noblock=4$ predictive mismatched CSIT $\de=0.5$}
  \label{fig:DMT_MIMO_pred_del}
\end{figure}

\begin{figure}[htbp]
  \centering
  \includegraphics[width=1\columnwidth]{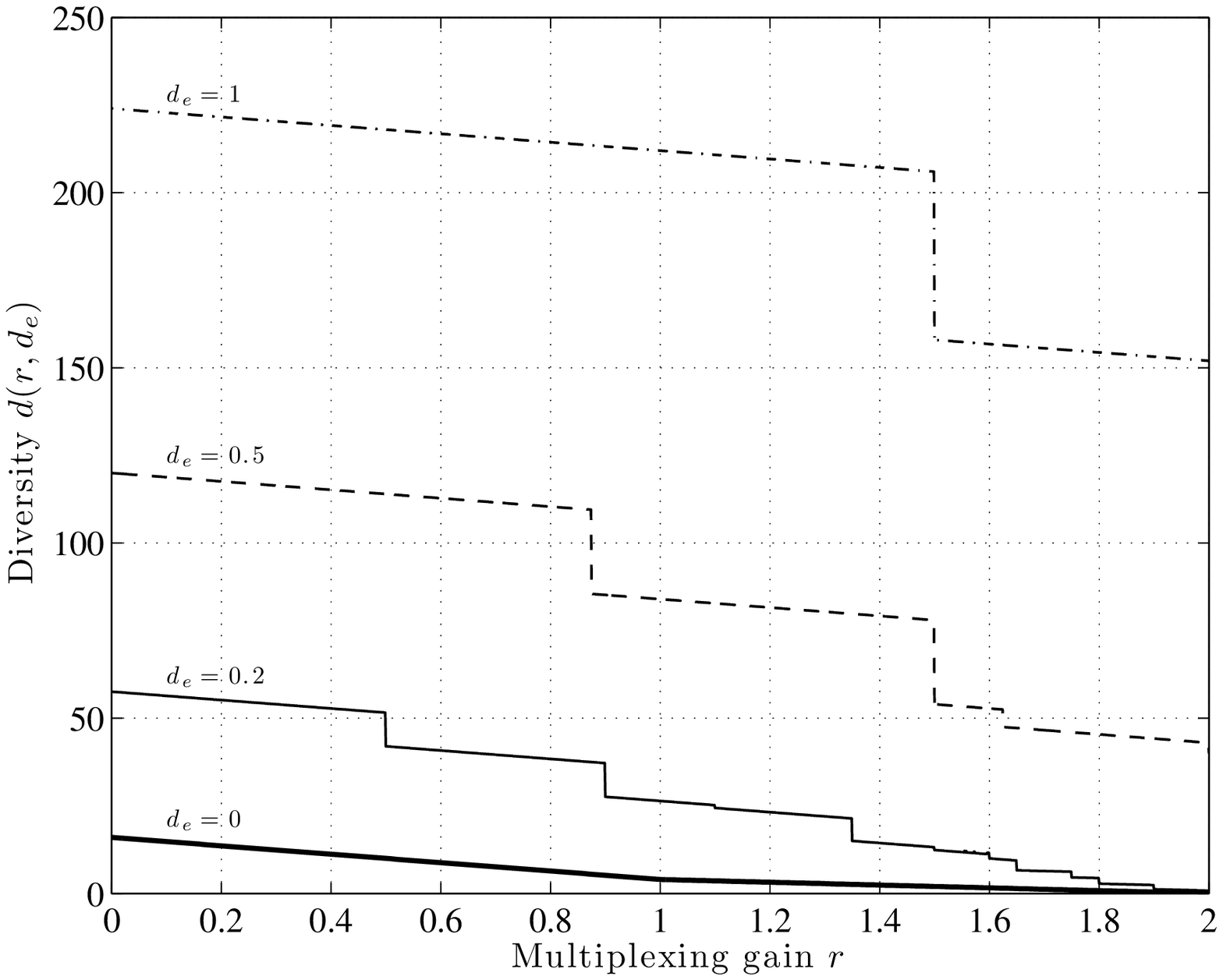}
  \caption{The achievable DMT of a 2-by-2 MIMO block-fading channel with $\noblock =4$, predictive mismatched CSIT $t=1$}
  \label{fig:DMT_MIMO_pred_va_de}
\end{figure}
%
% \newpage
% \begin{figure}[htbp]
%   \centering
%   \includegraphics[width=1\columnwidth]{DMT_SISO}
%   \caption{The optimal DMT of a SISO block-fading channel with $\noblock =4$. }
%   \label{fig:SISO_DMT}
% \end{figure}
% %
% \newpage
% \begin{figure}[htbp]
%   \centering
%   \includegraphics[width=1\columnwidth]{mismatch_causal_SIMO}
%   \caption{Optimal DMT of a 1-by-2 SIMO block-fading channel with $\noblock=5$, causal mismatched CSIT with delay $\del=2$.}
%   \label{fig:SIMO_mismatch_causal}
% \end{figure}
%
% \newpage
% \begin{figure}[htbp]
%   \centering
%   \includegraphics[width=1\columnwidth]{DMT_predict_SIMO}
%   \caption{The optimal DMT of a 1-by-2 SIMO block-fading channel with $\noblock=4$, one-block predictive mismatched CSIT, and CSIT mismatch exponent $\de=0, 0.25, 0.5, 1, 1.5$. }
%   \label{fig:SIMO_predict_CSIT}
% \end{figure}
%
% \newpage
% \begin{figure}[htbp]
%   \centering
%   \includegraphics[width=1\columnwidth]{predictive_diver_SIMO_varioust}
%   \caption{THe optimal DMT of a 1-by-2 SIMO block-fading channel with $\noblock=4, d_e= 0.5$ and predictive CSIT with $t=0, 1, 2, 3$. }
%   \label{fig:SIMO_predict_various_t}
% \end{figure}
% %
\newpage
\begin{figure}[htbp]
  \centering
  \includegraphics[width =1 \columnwidth]{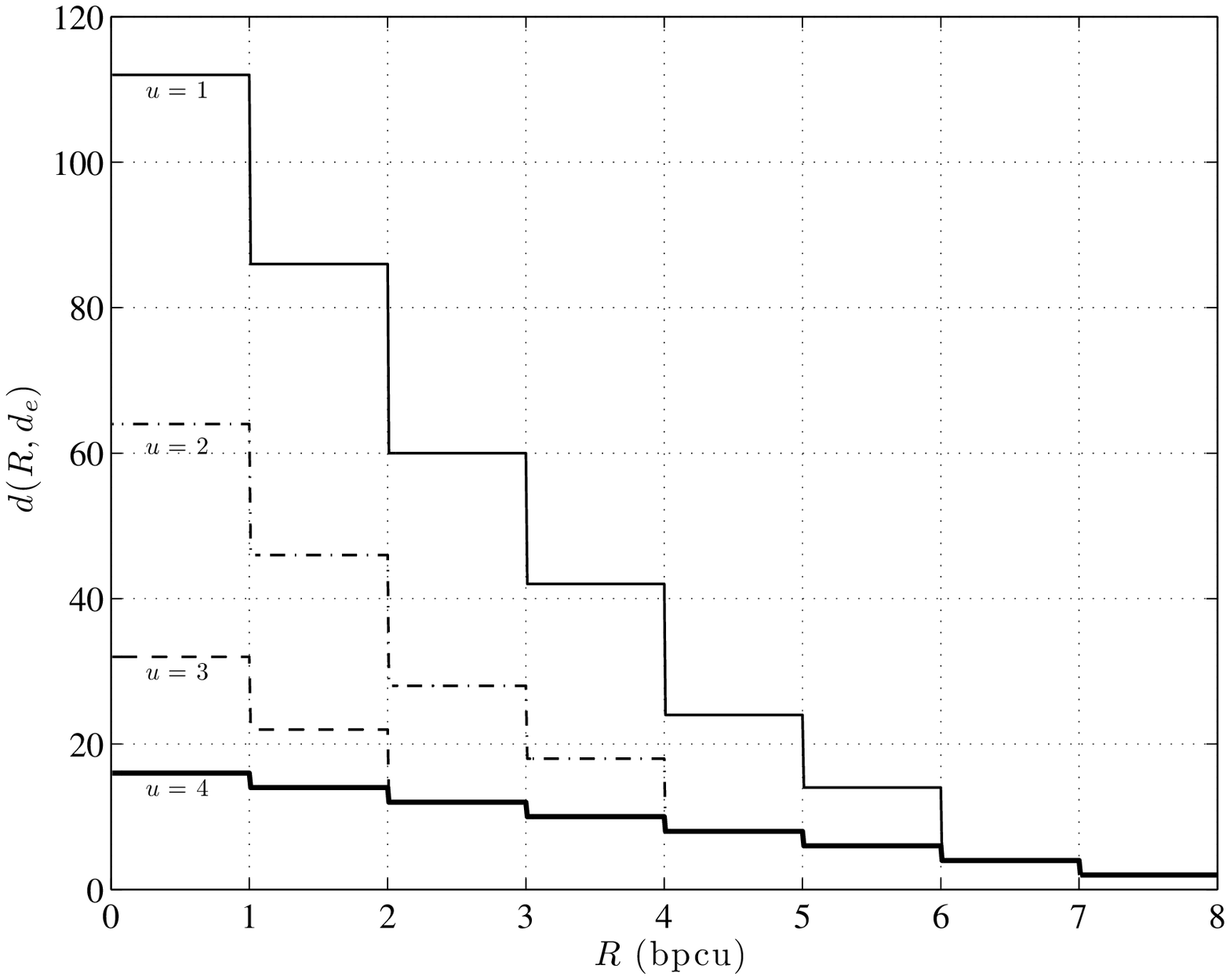}
  \caption{The optimal rate-diversity tradeoff for a 2-by-2 MIMO block-fading channel with $\noblock=4$ using 16-QAM input constellation, assuming  mismatch causal CSIT with exponent $\de= 1$.  }
  \label{fig:RDT_causal_mismatch_vsdel}
\end{figure}
\newpage
\begin{figure}[htbp]
  \centering
  \includegraphics[width=1\columnwidth]{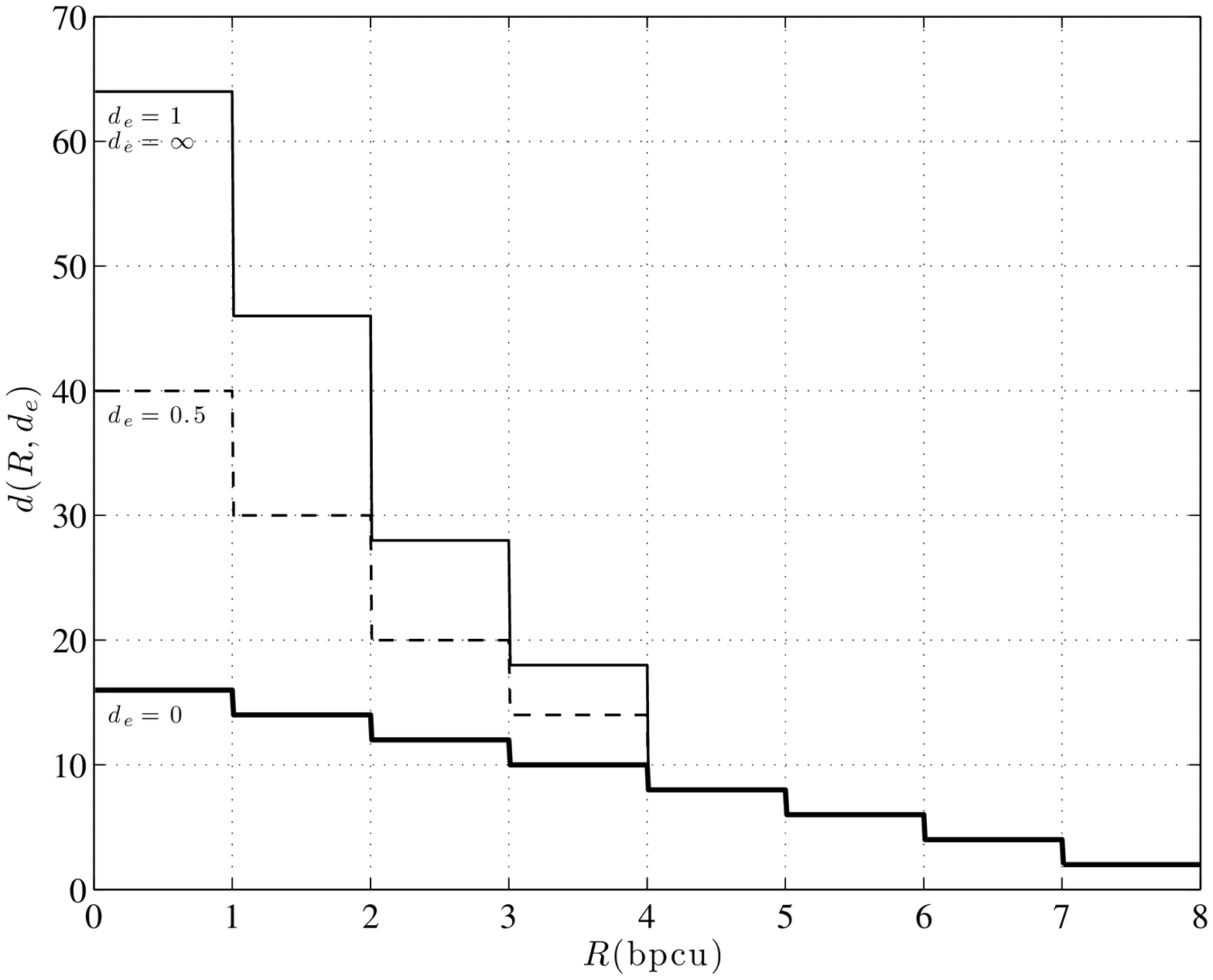}
  \caption{The optimal rate-diversity tradeoff for a 2-by-2 MIMO block-fading channel with $\noblock=4$ using 16-QAM input constellation, assuming causal CSIT with delay $\del=2$. }
  \label{fig:RDT_causal_mismatch_vsde}
\end{figure}
\newpage
\begin{figure}[htbp]
 \centering
  \includegraphics[width=1\columnwidth]{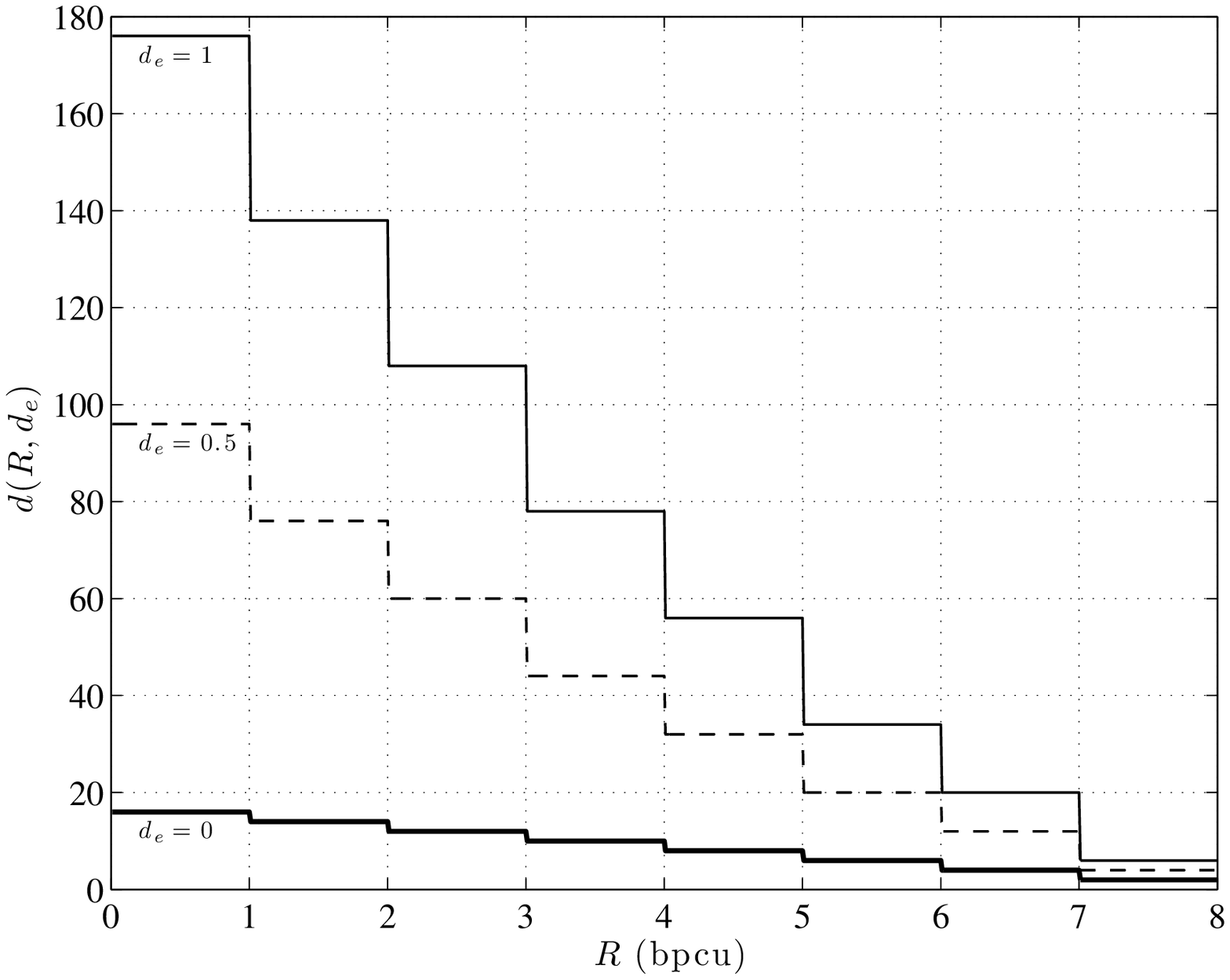}
  \caption{The optimal rate-diversity tradeoff for a 2-by-2 MIMO block-fading channel with $\noblock=4$ using 16-QAM input constellation, assumming mismatch predictive CSIT with $t=0$. }
  \label{fig:RDT_MIMO_predict_vsde}
\end{figure}
\newpage
\begin{figure}[htbp]
  \centering
  \includegraphics[width=1\columnwidth]{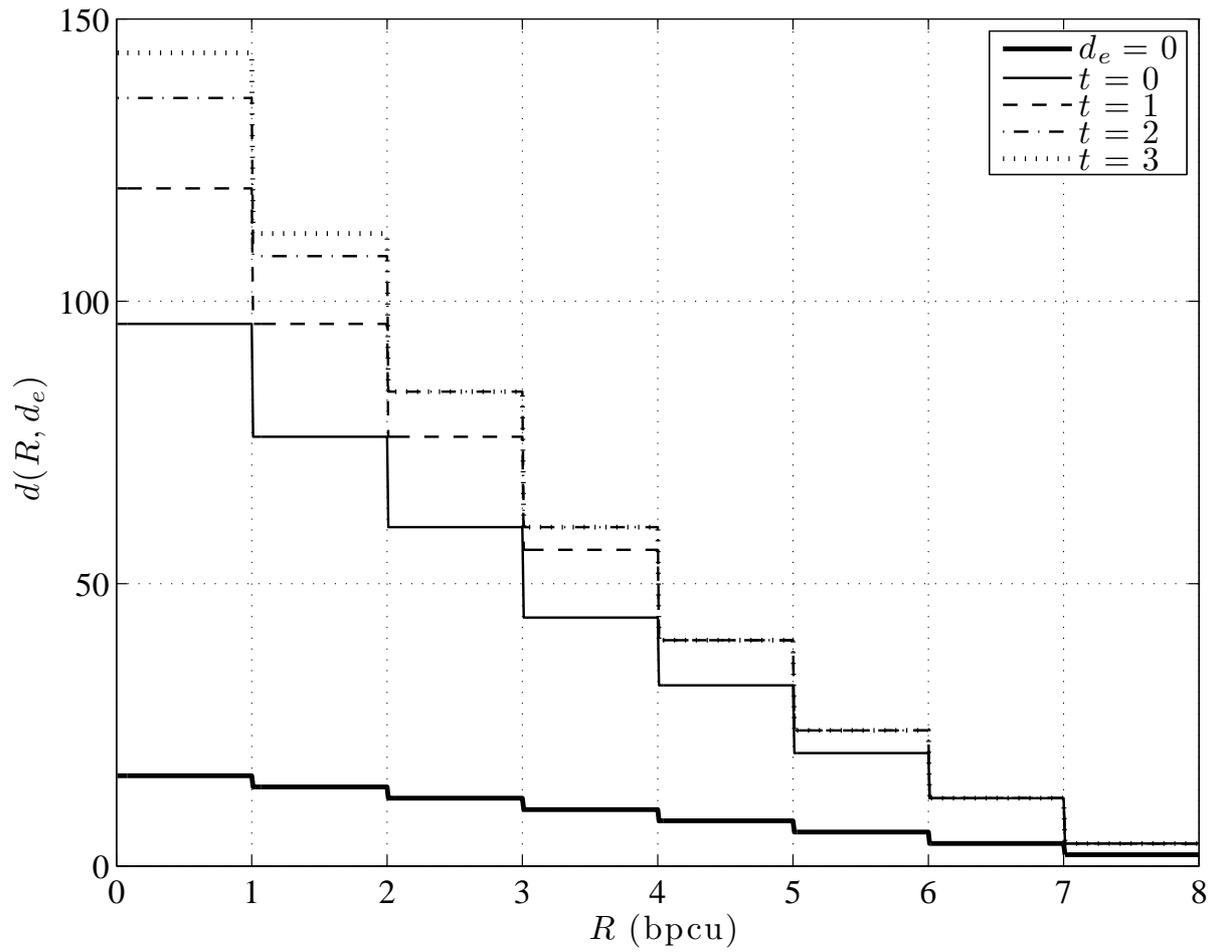} 
  \caption{The optimal rate-diversity tradeoff for a 2-by-2 MIMO block-fading channel with $\noblock=4$ using 16-QAM input constellation, with predictive CSIT of $t$ block and mismatched exponent $\de=0.5$. }
  \label{fig:RDT_MIMO_predict_vsdel}
\end{figure}
\end{document}